\documentclass[11pt]{article}

\usepackage[
letterpaper
,margin=2.9cm
]{geometry}

\usepackage{authblk}

\usepackage[utf8]{inputenc}
\usepackage{algo-viz}
\usepackage{amsthm}
\usepackage{xspace}
\usepackage{amsmath}
\usepackage{amssymb}
\newtheorem{theorem}{Theorem}
\newtheorem{definition}{Definition}
\newtheorem{lemma}{Lemma}
\usepackage{url}
\usepackage{caption,paralist}
\usepackage{subcaption}
\usepackage{float}
\usepackage{wrapfig}
\usepackage{thmtools,thm-restate}
\usepackage{amsfonts}

\usepackage[
colorlinks,
citecolor={red!80!black},
linkcolor={blue!60!black},
urlcolor={red!90!black}
]{hyperref}

\usetikzlibrary{matrix}
\usetikzlibrary{patterns}

\newcommand{\ie}{\emph{i.e.}}
\newcommand{\R}{\mathcal{R}}
\newcommand{\N}{\mathbb{N}}
\newcommand{\Z}{\mathbb{Z}}
\newcommand{\Views}{Views}
\newcommand{\tr}[1]{\vec{t}_{(#1)}}

\newcommand{\algo}[2]{\ensuremath{\mathcal{A}_{#1}^{\mathit{#2}}}\xspace}
\newcommand{\Cl}{\ensuremath{\mathit{Cl}}}
\newcommand{\IT}{\ensuremath{\mathit{IT}}}
\newcommand{\ITC}{\ensuremath{\mathcal{IT}}}

\newcommand{\Up}{\ensuremath{\mathit{Up}}}
\newcommand{\Down}{\ensuremath{\mathit{Down}}}
\newcommand{\Left}{\ensuremath{\mathit{Left}}}
\newcommand{\Right}{\ensuremath{\mathit{Right}}}
\newcommand{\Idle}{\ensuremath{\mathit{Idle}}}
\newcommand{\SRE}{\ensuremath{\mathit{SER}}\xspace}

\newcommand{\resetAlgoNodeStyle}{
\tikzstyle{algonode}=[draw=black, circle, semithick, minimum width=1.3em, minimum height=1.3em, inner sep=0.1em]
\tikzstyle{algonode-small}=[draw=black, circle, semithick, minimum width=0em, minimum height=0em, inner sep=0.2em]
}

\resetAlgoNodeStyle{}

\title{Infinite Grid Exploration by Disoriented Robots\thanks{This study has been partially supported by the \textsc{anr}
    projects \textsc{Descartes} (ANR-16-CE40-0023) and \textsc{Estate}
    (ANR-16-CE25-0009).}}
\author[1]{Quentin Bramas}
\author[2]{St\'ephane Devismes}
\author[3]{Pascal Lafourcade}
\affil[1]{University of Strasbourg, ICUBE}
\affil[2]{Universit\'e Grenoble Alpes, VERIMAG}
\affil[3]{University Clermont Auvergne, LIMOS}

\date{}

\begin{document}

\maketitle

\begin{abstract}
  We deal with a set of autonomous robots moving on an infinite
  grid. Those robots are opaque, have limited visibility capabilities,
  and run using synchronous Look-Compute-Move cycles. They all agree
  on a common chirality, but have no global compass.  Finally, they
  may use lights of different colors that can be seen by robots in
  their surroundings, but except from that, robots have neither
  persistent memories, nor communication mean.

  We consider the \emph{infinite grid exploration} (IGE) problem.
  For this problem we give two impossibility
  results and three algorithms, including one
  which is optimal in terms of number of robots.

  In more detail, we first show that two robots are not sufficient in
  our settings to solve the problem, even when robots have a common
  coordinate system. We then show that if the robots' coordinate
  systems are not self-consistent,
  three or four robots are not
  sufficient to solve the problem neither.

  Finally, we present three algorithms that solve the IGE problem in
  various settings.  The first algorithm uses six robots with constant
  colors and a visibility range of one. The second one uses the
  minimum number of robots, \ie, five, as well as five modifiable
  colors, still under visibility one. The last algorithm requires
  seven oblivious anonymous robots, yet assuming visibility
  two. Notice that the two last algorithms also  achieve
  exclusiveness.
\end{abstract}

\paragraph{Keywords:} exploration, infinite grid, robots with lights, chirality.

\section{Introduction}

We deal with a swarm of mobile robots having low computation and
communication capabilities. The robots we consider are opaque ({\em
  i.e.}, a robot is able to see another robot if and only if no other
robot lies in the line segment joining them) and run in synchronous
Look-Compute-Move cycles, where they can sense their surroundings
within a limited visibility range. All robots agree on a common
chirality (\ie, when a robot is located on an axis of symmetry in its
surroundings, it is able to distinguish its two sides one from
another), but have no global compass (they agree neither on a
North-South, nor a East-West direction). However, they may use lights
of different
colors~\cite{10.1007/978-3-540-69507-3_5,Peleg:2005:DCA:2098351.2098353}. These
lights can be seen by robots in their surroundings.  However, except
from those lights, robots have neither persistent memories nor
communication capabilities.

We are interested in coordinating such weak robots, endowed with both
typically small visibility range ({\em i.e.}, one or two) and few
light colors (only a constant number of them), to solve an infinite
task in an infinite discrete environment.  As an attempt to tackle
this general problem, we consider the exploration of an infinite grid,
where nodes represent locations that can be sensed by robots and edges
represent the possibility for a robot to move from one location to
another.  Precisely, the exploration task consists in ensuring that
each node of the infinite grid is visited within finite time by at
least one robot. In the following, we refer to this problem as the
{\em Infinite Grid Exploration} (IGE) problem.

\paragraph{Contribution.} Our contribution consists of both negative
and positive results.  On the negative side, we show that if robots
have a common chirality but a bounded visibility range, then the IGE
problem is not solvable with
\begin{itemize}
\item two robots, even if those robots agree on common North (the
  proof of this result is essentially the adaptation to our context of
  the impossibility proof given
  in~\cite{Emek:2015:MAT:2852367.2852607});
\item three or four robots with self-inconsistent compass (\ie, the
  compass may change throughout the execution).
\end{itemize}
On the positive side, we provide three algorithms for solving the IGE
problem using opaque robots equipped with self-inconsistent compass,
yet agreeing on a common chirality.  Two of them additionally satisfy
{\em exclusiveness}~\cite{BBMR08j}, which requires any two robots to
never simultaneously occupy the same position nor traverse the same
edge.  The first one requires the minimum number of robots, \ie, five,
and ensures exclusiveness. The robots use modifiable lights with only
five states, and have a visibility range restricted to one. The second
algorithm solves the problem with six robots and only three
non-modifiable colors, still assuming visibility range one.  The last
algorithm requires seven identical robots without any light (\ie,
seven oblivious\footnote{{\em Oblivious} means that robots cannot
  remember the past.} anonymous robots) and ensures exclusiveness, yet
assuming visibility range two. Our contributions are summarized in the
table below.
\begin{center}
\begin{tabular}{|c|c|c|c|c|}
    \hline
    visibility range &  \# of robots & \# of colors & modifiable colors?     &  exclusiveness? \\\hline
    1 &  5 \textbf{(opt)}  & 5 & yes & yes \\\hline
    1 &  6  & 3 & no  & no\\\hline
    2 &  7 & 1 & N/A  & yes \\\hline
\end{tabular}
\end{center}
In order to help the reader, animations, for each of the three
algorithms, are available online~\cite{quentin_bramas_2019_2625730}.

\paragraph{Related Work.}

The model of robots with lights have been proposed by Peleg
in~\cite{10.1007/978-3-540-69507-3_5,Peleg:2005:DCA:2098351.2098353}.
In~\cite{Das:2016:AMR:2853249.2853724}, the authors use robots with
lights and compare the computational power of such robots with respect
to the three main execution model: fully-synchronous,
semi-synchronous, and asynchronous. Solutions for dedicated problems
such as {\em weak gathering} or {\em mutual visibility} have been
respectively investigated in~\cite{LFCPSV17j} and~\cite{OD18c}.

Mobile robot computing in infinite environments has been first studied in
the continuous two-dimensional Euclidean space.  In this context,
studied problems are mostly {\em terminating} tasks, such as \emph{pattern
  formation}~\cite{DP07j} and
\emph{gathering}~\cite{FLOCCHINI2005147}, \ie, problems where robots
aim at eventually stopping in a particular configuration specified by
their relative positions. A notable exception is the {\em flocking}
problem~\cite{YSDT11j}, {\em i.e.}, the infinite task consisting of
forming a desired pattern with the robots and make them moving
together while maintaining that formation.

When considering a discrete environment, space is defined as a graph,
where the nodes represent the possible locations that a robot can take
and the edges the possibility for a robot to move from one location to
another. In this setting, researchers have first considered finite
graphs and two variants of the exploration problem, respectively
called the {\em terminating} and {\em perpetual} exploration.  The
terminating exploration requires every possible location to be
eventually visited by at least one robot, with the additional
constraint that all robots stop moving after task completion. In
contrast, the perpetual exploration requires each location to be
visited infinitely often by all or a part of robots.
In~\cite{Devismes:2012:OGE:2413887.2413894}, authors solve terminating
exploration of any finite grid using few asynchronous anonymous
oblivious robots, yet assuming unbounded visibility range.
The exclusive perpetual exploration of a finite grid is considered in
the same model in~\cite{BMPT11c}.

Various terminating problems have been investigated in infinite grids
such as {\em arbitrary pattern formation}~\cite{BAKS19c}, {\em mutual
  visibility}~\cite{ABKS18c}, and {\em gathering}~\cite{DN16,DDC17c}.
The possibly closest related work is that of Emek {\em et
  al.}~\cite{Emek:2015:MAT:2852367.2852607}.  In this paper, authors
consider a treasure search problem, which is roughly equivalent to the
IGE problem, in an infinite grid. They consider robots that operate in
two models: the semi-synchronous and synchronous ones.  However, they
do not impose the exclusivity at all since their robots can only sense
the states of the robots located at the same node (in that sense, the
visibility range is zero). The main difference with our settings is
that they assume all robots agree on a {\em global compass}, {\em
  i.e.}, they all agree on the same directions North-South and
East-West; while we only assume here a {\em common chirality}.  This
difference makes their model stronger, indeed they propose two
algorithms that respectively need three synchronous and four
asynchronous robots, while in our settings the IGE problem (even in
its non-exclusive variant) requires at least five robots. They also
exclude solutions for two robots.

In a followup paper~\cite{brandt2018tight}, Brandt {\em et al.}
 extend the impossibility result of Emek {\em et al.} Indeed,
they show the impossibility of exploring an infinite grid with three
semi-synchronous deterministic robots that agree on a common
coordinate system. Although proven using similar techniques, this
result is not correlated to ours. Indeed, the lower bound of Brandt
{\em et al.} holds for robots that are weaker in terms of synchrony
assumption (semi-synchronous {\em vs.} fully synchronous in our case),
but stronger in terms of coordination capabilities (common coordinate
system {\em vs.} self-inconsistent compass in our case).  In
other words, our impossibility results does not (even indirectly)
follows from those of Brandt {\em et al.} since in our model
difficulties arise from the lack of coordination capabilities and not
the level asynchrony. As a matter of facts, based on the results of
Emek {\em et al.}~\cite{Emek:2015:MAT:2852367.2852607}, four
(asynchronous) robots are actually necessary and sufficient in their
settings,  while it is five in our context.

\paragraph{Roadmap.}
In the next section, we define our computational model. In
Section~\ref{sec:impossible}, we present lower bounds on the number of
robots to solve the IGE problem. In Sections~\ref{sec:algos} and
Section~\ref{sec:visibility2}, we propose algorithms solving the
IGE under visibility range one and two.

\section{Model}\label{sec:model}

We consider a set $\R$ of $n > 0$ robots located on an \emph{infinite
  grid} graph with vertex set in $\Z\times\Z$, \ie, there is an edge
between two nodes $(i,j)$ and $(k,l)$ if and only if the
\emph{Manhattan distance} between those two nodes, {\em i.e.},
$|i-k|+|j-l|$, is 1.  The coordinates of the nodes are used for the
analysis only, \ie, robots cannot access them.

We assume time is discrete and at each \emph{round}, the robots
synchronously perform a \emph{Look-Compute-Move} cycle.  In the
\emph{Look} phase, a robot gets a snapshot of the subgraph induced by the
nodes at distance $\Phi > 0$ from its position. $\Phi$ is called the
{\em visibility range} of the robots.
The snapshot is not oriented in any way as the robots do not agree on
a common North.  However, it is implicitly ego-centered since the
robot that performs a Look phase is located at the center of the
subgraph in the obtained snapshot.
Then, each robot \emph{computes} a destination (either
Up, Left, Down, Right or Idle) based only on the snapshot it received.
Finally, it \emph{moves} towards its computed destination. 
 We also assume that robots are
\emph{opaque} and can obstruct the visibility so that if three robots
are aligned, the two extremities cannot see each other.

Robots may have \emph{Lights} with different colors that can be seen
by robots within distance $\Phi$ from them. Let $\Cl$ be the set of possible
colors.  Even when an algorithm does not achieve exclusiveness, we
forbid any two robots to occupy the same node simultaneously. So, the
\emph{state} of a node is either the color of the light of the robot
located at this node, if there is one, or $\bot$ otherwise. If there
is a robot we say the node is occupied, otherwise we say it is empty.

In the Look phase, the snapshot includes the state of the nodes (at
distance $\Phi$). After the compute phase, and if colors are
\emph{modifiable}, a robot may decide to change the color of its
light. Otherwise, colors are said to be \emph{fixed}.

\paragraph{Configurations.}
A \emph{configuration} $C$ is a set of couples $(p,c)$ where $p\in
\Z\times\Z$ is an occupied node and $c\in \Cl$ is the light's color of the
robot located at $p$. A node $p$ is empty if and only if $\forall
c,\,(p, c)\notin C$. We sometimes just write the set of occupied node
when the colors are clear from the context. For better readability, we
sometimes partition the configuration into several subsets $C_1, \ldots,
C_k$ and write $C=\{C_1, \ldots, C_k\}$ instead of writing $(C =
C_1\cup\ldots\cup C_k) \land (\forall i\neq j, C_i\cap
C_j=\emptyset)$.

\paragraph{Views.}

We denote by $G_r$ the {\em globally oriented view} centered at the
robot $r$, \ie, the subset of the configuration containing the states
of the nodes at distance at most $\Phi$ from $r$, translated such that
the coordinates of $r$ is $(0,0)$. We use this globally oriented view
in our analysis to describe the movements of the robots: when we say
"the robot moves Up", it is according to the globally oriented view.
However, since robots do not agree on a common North, they have no
access to the globally oriented view.  When a robot looks at its
surroundings, it obtains a snapshot.
To model
this, we assume that, the {\em local view} acquired by a robot $r$ in
the Look phase is the result of an arbitrary combination of
\emph{indistinguishable transformations} on $G_r$.  The set $\IT$ of
indistinguishable transformations depends on the assumptions we make
on the robots. The rotations of angle $\pi/2$, $\pi$ and $3\pi/2$,
centered at $r$ are in $\IT$ if and only if the robots do not agree on
a common North direction. A mirroring is in $\IT$ if and only if the
robots do not agree on a common \emph{chirality} (they cannot
distinguish between clockwise and counterclockwise).
Moreover, in the obstructed visibility model, the function that
removes the state of a node $u$ if there is another robot between $u$
and $r$ is in $\IT$ and is systematically applied.
$\ITC$ denotes the set of possible combinations of indistinguishable transformations.

For a robot $r$, if the same transformation $f_r\in\ITC$ is used for every
look phase of $r$, we say that $r$ is
\emph{self-consistent}. Otherwise, an adversary can choose a different
transformation for each look phase, and $r$ is said to be \emph{self-inconsistent}.

In the rest of the paper, all our algorithms assume that all 
robots agree on a common chirality, \ie, they can distinguish two
mirrored views, but we make no assumption on the self-consistency of
the coordinate system. On the other hand, we give impossibility
results for stronger model when possible.

When a robot $r$ computes a destination $d$, it is relative to its
local view $f(G_r)$, which is the globally oriented view transformed by $f\in\ITC$. It
is important to see that the actual movement of the robot in the
\emph{globally oriented view} is actually $f^{-1}(d)$. Indeed, if $d=\Up$ but the
robot sees the grid upside-down ($f$ is the $\pi$-rotation), then the
robot moves $\Down = f^{-1}(\Up)$. %
In a configuration $C$, $V_C(i,j)$ denotes the globally oriented view of a robot
located at $(i,j)$.

\paragraph{Exploration Algorithm.}

An algorithm \algo{}{} is a tuple $(\Cl, I, T)$ where $\Cl$ is the set of
possible colors, $I$ is the initial configuration, and
$T$ is the transition function $\Views\rightarrow\{\Idle, \Up, \Left,
\Down$, $\Right\}\times \Cl$, where $\Views$ is the set of possible globally oriented views.

Recall that we assume in our algorithms that the robots are not
self-consistent. In this context, we say that an algorithm $(\Cl, I,
T)$ is \emph{well-defined} if the global destination computed by a
robot does not depend on the transformation $f$ chosen by the
adversary, \ie, for every globally oriented view $V$, and every
transformation $f\in\ITC$, we have $T(V) = f^{-1}(T(f(V)))$.  This is
usually a property obtained by construction of the algorithm, as we
describe the destination $d$ for a given globally oriented view $V$
and then assume that the destination computed from local view $f(V)$
is $f(d)$, for any $f\in \ITC$.

We can extend the transition function $T$ to the entire configuration. When the robots are in configuration $C$, the configuration obtained after one round of execution is denoted $T(C)$ and contains the couple $((i,j), c)$ if and only one of the following condition is verified:

\begin{itemize}
    \item $(i,j) \in C$ and $T(V_C(i,j)) = (Idle, c)$,
    \item $(i-1,j) \in C$ is occupied and $T(V_C(i-1,j)) = (Right, c)$,
    \item $(i+1,j) \in C$ is occupied and $T(V_C(i+1,j)) = (Left, c)$,
    \item $(i,j-1) \in C$ is occupied and $T(V_C(i,j-1)) = (Up, c)$,
    \item $(i,j+1) \in C$ is occupied and $T(V_C(i,j+1)) = (Down, c)$.
\end{itemize}

In the remaining of the paper, we sometime write $\algo{}{}(C)$ instead of $T(C)$.
The execution of an algorithm is the sequence $(C_i)_{i\in\N}$ of configurations, such that $C_0 = I$ and \hbox{$\forall i \geq 0$, $C_{i+1} = T(C_{i})$}.

\begin{definition}[Infinite Grid Exploration]
An algorithm \algo{}{} solves the \emph{infinite grid
  exploration} (IGE) problem if in the execution $(C_i)_{i\in\N}$ of \algo{}{} and
for any node $(i,j)\in \Z\times\Z$ of the grid, there exists $t \in \N$ such that $(i,j)$ is occupied in $C_t$.
\end{definition}

\paragraph{Notations.}

$\tr{i,j}(C)$ denotes the translation of the configuration $C$ of
vector $(i,j)$.

\section{Impossibility Results}\label{sec:impossible}

The lemma below states the intuitive, yet non trivial, idea that, in
order to explore an infinite grid, the maximum distance between two
farthest robots should tend to infinity. This claim is the cornerstone in the proofs of our impossibility results.

\begin{lemma}\label{lem:distance is increasing}
    Let $(C_i)_{i\in\N}$ be an execution of a given algorithm \algo{}{}.
    Let $d_i$ be the distance between the two farthest robots in
    Configuration $C_i$.  If \algo{}{} solves the IGE problem, then
    $\lim_{i\rightarrow+\infty} d_i = +\infty$.
\end{lemma}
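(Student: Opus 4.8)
The plan is to argue by contradiction. Suppose \algo{}{} solves IGE but $d_i$ does \emph{not} tend to infinity; then there is a bound $D\in\N$ with $d_i\le D$ for infinitely many rounds $i$, and I fix such a $D$ and set $S=\{i\in\N : d_i\le D\}$, which is infinite. The first step is to observe that, up to translation, there are only finitely many configurations in which the $n=|\R|$ robots have pairwise distance at most $D$: translating so that the lexicographically least occupied node lies at the origin confines every robot to the finite ball $\{(x,y):|x|+|y|\le D\}$, and each occupied node carries one of the finitely many colors of $\Cl$ (finite since the number of colors is constant). As $S$ is infinite, the pigeonhole principle yields two rounds $s<s'$ in $S$, a configuration $C^\star$, and translation vectors $p,q$ with $C_s=\tr{p}(C^\star)$ and $C_{s'}=\tr{q}(C^\star)$.

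The second, and conceptually central, step is that the (extended) transition function commutes with translations, $T(\tr{a,b}(C))=\tr{a,b}(T(C))$, which is immediate from the definition of $T(C)$ because $V_C(i,j)$ depends only on the states of the nodes relative to $(i,j)$, so $V_{\tr{a,b}(C)}(i+a,j+b)=V_C(i,j)$. Combined with $C_{s'}=T^{\,s'-s}(C_s)$, and writing $\ell=s'-s>0$ and $v=q-p$, this gives the identity $T^{\ell}(C^\star)=\tr{v}(C^\star)$: from $C^\star$ on, the execution is periodic up to a global translation by $v$, with period $\ell$. Hence $C_{s+t}=\tr{p}(T^t(C^\star))$ for all $t\ge 0$, and $T^{k\ell}(C^\star)=\tr{kv}(C^\star)$ for all $k\ge 0$.

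For the third step, write $\mathrm{occ}(C)$ for the set of occupied nodes of $C$ and let $W=\bigcup_{t=0}^{\ell-1}\mathrm{occ}(T^t(C^\star))$; $W$ is finite because a robot moves at most one edge per round. Periodicity gives $\bigcup_{t\ge0}\mathrm{occ}(T^t(C^\star))=\bigcup_{k\ge0}\tr{kv}(W)$, so the set $N$ of nodes occupied at some round $\ge s$ equals $\tr{p}\!\left(\bigcup_{k\ge0}\tr{kv}(W)\right)$. If $v=(0,0)$, then $N$ is finite; otherwise, with the linear form $\phi(x,y)=v\cdot(x,y)$ one has $\phi(w+kv)=\phi(w)+k\|v\|^2\ge\min_{w'\in W}\phi(w')$ for all $w\in W$ and $k\ge 0$, so $N$ is contained in a half-plane $\{z:\phi(z)\ge c\}$. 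In either case $N$ misses infinitely many nodes of the grid, and since at most $n\cdot s$ additional nodes are occupied during the rounds $0,\dots,s-1$, the whole execution occupies a set of nodes that still misses infinitely many nodes of $\Z\times\Z$ — contradicting that \algo{}{} solves IGE.

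The routine bookkeeping lies in Step~1 (a canonical choice of translate together with the finiteness of $\Cl$) and in the half-plane containment of Step~3. I expect the main point — and the only place where anything nontrivial happens — to be Step~2: translation-equivariance of $T$ forces any bounded-diameter recurrence of the configuration to be \emph{exactly} periodic up to a constant drift, and a pattern of bounded diameter that either stays put or drifts along a fixed ray can never sweep the whole plane.
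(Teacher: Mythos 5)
Your proof is correct and follows essentially the same route as the paper's: pigeonhole on the finitely many bounded-diameter configurations up to translation, translation-equivariance of the transition function to obtain periodicity up to a fixed drift $v$, and the observation that a bounded pattern that stays put or drifts along a ray cannot cover the plane. The only cosmetic difference is the final step — the paper exhibits one explicit unvisited node $(t^{-1})^{m+1}(0,0)$ via a maximality argument, whereas you contain the visited set in a half-plane plus a finite set; both are valid.
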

\begin{proof}
    We proceed by the contradiction. So we suppose there exists a bound $B > 0$
    such that there are infinitely many configurations where the distance
    between every pair of robots is less than $B$. In other words,
    there is a subsequence of $(C_i)_{i\in\N}$ where the distance
    between every pair of robots is less than $B$.  Let $(b_i)_{i\in
      \N}$ be the sequence of indices of this subsequence, \ie,
    $(b_i)_{i\in \N}$ is a strictly increasing sequence of integers
    such that $d_{b_i} < B$.

    When all robots are at distance less than $B$, then the occupied
    positions are included in a square sub-grid of size $B\times B$.
    Since the number of possible configurations included in a sub-grid
    of size $B\times B$ is finite, there must be two indices $k$ and
    $l$ such that $C_{b_l} = t(C_{b_k})$ and $k < l$ for a given
    translation $t$. The movements done by the robots in
    Configurations $C_{b_k}$ and $C_{b_l}$ are the same because each
    robot has the same globally oriented view in both configurations,
    only their positions change. Thus $C_{b_l + 1} = t(C_{b_k + 1})$
    and so on so forth, so that $\forall i,\, C_{b_l + i} = t(C_{b_k +
      i})$. We obtain that the configurations are periodic (with
    period $P=b_l-b_k$) and a node $u$ is visited if and only if it is
    visited before round $b_l$ or if there exists a node $v$ visited
    between round $b_k$ and $b_l$ such that $u = t^q(v)$ with
    $q>0$. So, we claim that there exists a node that is never
    visited.

    To prove this claim, we now exhibit such a node.  Let $I$ be the
    set of integers $i$ such that $(t^{-1})^i(0,0)$ is visited before round
    $b_l$
    applied $i$ times. $I$ is finite because the number of nodes
    visited before $b_l$ is finite.  Let $m$ be the maximum integer in
    $I$ (or 0 if $I$ is empty). Let $u = (t^{-1})^{m+1}(0,0)$. Then,
    clearly $u$ is not visited before round $b_l$, otherwise we have a
    contradiction with the maximality of $m$. Moreover, $u$ cannot be
    visited after round $b_l$, otherwise $u$ would be equals to
    $t^q(v)$ for a given integer $q$ and a given node $v$, visited
    between round $b_k$ and $b_l$, \ie, $v = (t^{-1})^q(u) =
    (t^{-1})^{q+m+1}(0,0)$, which also contradicts the maximality of
    $m$. Thus $u$ is never visited.
\end{proof}

The next theorem shows the impossibility of exploring the infinite
grid with two robots, even if they agree on a common coordinate
system. The proof of the theorem is essentially an adaptation to our
context of the proof of impossibility given in
\cite{Emek:2015:MAT:2852367.2852607}; yet it is necessary since our
model is not comparable to that
of~\cite{Emek:2015:MAT:2852367.2852607}.

\begin{theorem}\label{theo:impossibility with 2 robots}
    No algorithm can solve the IGE problems using two robots, even if
    robots agree on common North and chirality.
\end{theorem}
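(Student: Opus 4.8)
The plan is to use Lemma~\ref{lem:distance is increasing} to reduce to the situation where the two robots permanently lose sight of each other, and then to observe that a robot acting in isolation can only sweep a bounded‑width half‑strip of the grid; two such half‑strips cannot cover $\Z^2$. (As noted, this mirrors the argument of Emek \emph{et al.}, recast in our model.)

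Concretely, I would argue by contradiction: assume some algorithm $\algo{}{}$ solves the IGE problem with $n=2$ robots that agree on a common North and chirality. By Lemma~\ref{lem:distance is increasing}, the distance between the two robots tends to $+\infty$, so there is a round $t_0$ after which this distance always exceeds $2\Phi$. From round $t_0$ on, neither robot ever occurs in the other's snapshot, hence each robot $r$ behaves in isolation: since the robots agree on North and chirality, $\ITC$ acts trivially and there is no obstruction, so the local view of $r$ equals its globally oriented view, and this view is entirely determined by the color of $r$'s own light (every other node within range being empty). Therefore, from round $t_0$ on, the sequence of colors of $r$ is produced deterministically, and since $\Cl$ is finite it is eventually periodic: there are a round $s_r\ge t_0$ and an integer $P_r\ge 1$ such that the color, and hence the move, of $r$ at round $k$ equals that at round $k+P_r$ for all $k\ge s_r$.

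Next I would extract the geometry. Let $\vec v_r$ be the sum of the $P_r$ move vectors $r$ performs in one period, and $p_r$ its position at round $s_r$. Then $r$ is at $p_r+k\vec v_r$ at round $s_r+kP_r$, and at every round $\ge s_r$ it lies within Manhattan distance $P_r$ of the set $L_r=\{\,p_r+k\vec v_r:k\in\N\,\}$. If $\vec v_r=(0,0)$, then after round $s_r$ the robot stays in a ball of radius $P_r$, so it visits only finitely many nodes overall. If $\vec v_r\neq(0,0)$, then the set of nodes $r$ ever visits is contained in the union of the finitely many nodes it visited before round $s_r$ and the ``thickened half‑line'' $\{x\in\Z^2:\exists k\in\N,\ \|x-(p_r+k\vec v_r)\|_1\le P_r\}$, where $\|\cdot\|_1$ denotes the Manhattan norm. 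In all cases, the set of all nodes visited during the execution is contained in $F\cup H_1\cup H_2$, where $F$ is finite and each $H_r$ is either finite or a thickened half‑line.

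Finally, the crux: showing $F\cup H_1\cup H_2\neq\Z^2$. Fix a node $o$ and, for $N\in\N$, let $D_N=\{x\in\Z^2:\|x-o\|_1=N\}$, which has $4N$ nodes. For $N$ large, $F\cap D_N=\emptyset$. For each $r$ with $\vec v_r\neq(0,0)$, the map $k\mapsto\|p_r+k\vec v_r-o\|_1$ is convex and, for $k$ large, has slope $\|\vec v_r\|_1\ge 1$; hence $\{k:\|p_r+k\vec v_r-o\|_1\in[N-P_r,N+P_r]\}$ is an interval whose length is bounded independently of $N$, so $|H_r\cap D_N|\le c_r$ for a constant $c_r$ not depending on $N$. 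Thus $|D_N\cap(F\cup H_1\cup H_2)|\le c_1+c_2$ for all large $N$, while $|D_N|=4N\to\infty$; so for $N$ large some node of $D_N$ is never visited, contradicting that $\algo{}{}$ solves the IGE problem. I expect this last counting step to be the main technical point — in particular making rigorous that a thickened half‑line meets each $\ell_1$‑sphere in a number of lattice points that does not grow with the radius; a clean alternative is to note that, outside a large enough ball, a thickened half‑line lies in an angular sector whose angular width tends to $0$, so two such sectors cannot cover all directions.
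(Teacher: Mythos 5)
Your proposal follows the same skeleton as the paper's proof: invoke Lemma~\ref{lem:distance is increasing} to get a round after which the two robots are permanently out of each other's sight, observe that an isolated robot's view (hence its move and next color) is determined solely by its own light so its behaviour is eventually periodic, and conclude that some node is never visited. The only place you diverge is the last step. The paper finishes by saying ``using the same argument as in the proof of the previous lemma,'' i.e.\ the explicit construction of an unvisited node as an iterated inverse translate of the origin; you instead bound the set of ever-visited nodes by a finite set plus two thickened half-lines and count intersections with large $\ell_1$-spheres. Both work, but your version is arguably more self-contained: the Lemma~\ref{lem:distance is increasing} construction is stated for a \emph{single} translation of the whole configuration, whereas here the two robots follow two independent periodic translations (possibly with different periods and vectors), so reusing it verbatim requires a small adaptation that the paper leaves implicit; your sphere-counting argument handles the two independent trajectories uniformly and makes the ``two half-strips cannot cover $\Z\times\Z$'' step rigorous. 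The price is a slightly longer write-up; the convexity/slope argument you flag as the main technical point is indeed correct and routine.
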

\begin{proof}
  By the previous Lemma, there is a configuration from which the two
  robots will no more see each other (their distance will remain
  greater than an arbitrary bound $B\geq \Phi$). For each robot, its
  next move only depends on the color of its light. Since the number
  of color is finite, the movements of each robot are periodic and
  using the same argument as in the proof of the previous lemma, we
  can conclude that a node is never visited.
\end{proof}

\begin{lemma}\label{lem:a robot alone stay idle}
  A robot with self-inconsistent compass and that sees no other robot,
  either stays idle or the adversary can make it alternatively moving
  between two chosen adjacent nodes.
\end{lemma}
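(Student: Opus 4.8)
The plan is to exploit the fact that an isolated robot has a completely symmetric view, so that the adversary controls its trajectory entirely through its choice of indistinguishable transformation. First I would observe that if a robot $r$ sees no other robot, then its globally oriented view $G_r$ consists solely of the center node, carrying $r$'s current color $c$, surrounded by empty nodes, and that this view is a fixed point of $\ITC$: the rotations of angles $\pi/2,\pi,3\pi/2$ and the mirroring fix it because it has no distinguished direction, and the occlusion map acts as the identity since there is nothing to occlude. Hence $f(G_r)=G_r$ for every $f$ the adversary may pick, so the destination and the new color that $r$ computes from its local view $f(G_r)=G_r$ do not depend on $f$; they depend only on $c$. Write $D(c)$ for this destination and $\mathrm{next}(c)$ for the new color. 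The actual move performed in the globally oriented view is $f^{-1}(D(c))$.

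Then comes the key dichotomy. If $D(c)=\Idle$, then $f^{-1}(\Idle)=\Idle$ regardless of $f$, so $r$ does not leave its node (it may still change color). If $D(c)\neq\Idle$, then since $\ITC$ contains the three non-trivial rotations, $\{f^{-1}(D(c)) : f\in\ITC\}=\{\Up,\Down,\Left,\Right\}$, so the adversary can steer $r$ to any chosen neighbor of its current node; and because $r$ is self-inconsistent, the adversary makes this choice afresh each round. Now I would trace the color sequence $c,\mathrm{next}(c),\mathrm{next}^2(c),\dots$, which is eventually periodic since $\Cl$ is finite. Either $D$ equals $\Idle$ on every color occurring in this sequence, in which case $r$ never leaves its node and ``stays idle''; or $D$ takes a non-$\Idle$ value infinitely often, in which case let $A$ be the node $r$ occupies at the first round where $D\neq\Idle$ (it has remained on its initial node up to then) and let $B$ be any chosen neighbor of $A$. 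At that round the adversary sends $r$ to $B$, at the next non-$\Idle$ round it sends $r$ back to $A$, and so on, $r$ staying put on the intermediate $\Idle$ rounds; thus $r$ oscillates between the two adjacent nodes $A$ and $B$. Since $r$ always stays within distance $1$ of $A$, it remains isolated as long as the other robots do not come within distance $\Phi+1$ of $A$, so the hypothesis is preserved throughout.

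I expect the only real care needed is the bookkeeping around the lights: making precise that ``stays idle'' allows color changes, and that a single non-$\Idle$ value reachable along the color cycle already suffices to realize the oscillation (with the possibly several initial $\Idle$ rounds simply keeping $r$ on its starting node). It is also worth remarking explicitly that self-inconsistency is genuinely used here — a self-consistent isolated robot would instead follow a fixed, eventually periodic pattern of directions and could escape to infinity. The symmetry argument itself is immediate once $G_r$ is recognized as a fixed point of $\ITC$, so no real obstacle remains beyond this.
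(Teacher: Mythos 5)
Your proof is correct and follows essentially the same route as the paper's: the isolated robot's view is invariant under every transformation in $\ITC$, so the adversary's per-round choice of $f$ lets it realize any of the four global directions whenever the computed destination is not $\Idle$, and hence oscillate the robot between two chosen adjacent nodes. The only difference is that you additionally track the (eventually periodic) color sequence and the mixed $\Idle$/non-$\Idle$ case, a bookkeeping refinement that the paper's two-line proof elides.
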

\begin{proof}
    If such a robot do not stay idle, it moves toward a direction
    $d\in\{Up, Down, Left, Right\}$ but since its orientation is not
    self-consistent, the adversary can chose, for each activation, a
    transformation $f\in\ITC$ such that the destination $f^{-1}(d)$ in the
    globally oriented view alternate between two chosen directions ({\em e.g.}, $Up$
    and $Down$).
\end{proof}

\begin{theorem}\label{theo:imp:3}
    It is impossible to solve the IGE problem with three robots equipped
with self-inconsistent compasses that agree on a common chirality.
\end{theorem}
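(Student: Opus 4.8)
The plan is to suppose, for contradiction, that some algorithm $\mathcal{A}=(\Cl,I,T)$ solves the IGE problem with three robots that have self-inconsistent compasses but agree on a common chirality, and then to produce an execution of $\mathcal{A}$ whose set of visited nodes is co-infinite in $\Z\times\Z$, contradicting the definition of IGE. The two ingredients are Lemma~\ref{lem:distance is increasing}, which forces the diameter of the configuration to diverge, and Lemma~\ref{lem:a robot alone stay idle}, which says a robot that sees nobody can be kept idle, or made to oscillate forever between its node and a chosen neighbour.

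I would start by extracting the coarse geometry. Let $E$ be the execution in which the adversary always applies the identity transformation; by Lemma~\ref{lem:distance is increasing} the diameter $d_i$ tends to $+\infty$ along $E$, so fix a large constant $B\ge\Phi$ and a round $t_0$ with $d_i>4B$ for all $i\ge t_0$. With only three robots, a diameter exceeding $4B$ forces, at every round $\ge t_0$, the proximity structure (the partition of the robots into maximal groups that are pairwise within distance $\Phi$) to be either ``three singletons'' or ``one pair plus one isolated robot''; moreover two robots at distance $>3B\ge3\Phi$ cannot come within $\Phi$ in one step, so a pair cannot be replaced by a pair of two different robots without the structure first visiting a ``three singletons'' round. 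Hence $E$ satisfies one of: (i) the structure eventually stabilises to a fixed pair $\{A,B\}$ and a permanently isolated robot $C$; or (ii) infinitely many rounds of $E$ are ``three singletons'' rounds.

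For case (i) no adversarial cleverness is needed. After the stabilisation round $t_1$, robot $C$ never sees another robot, so its snapshot reduces to its own colour; its colour therefore follows a deterministic automaton on the finite set $\Cl$ and becomes periodic, so the move sequence of $C$ becomes periodic and $C$ ultimately visits only nodes within bounded distance of a fixed line (a finite set if the per-period displacement vanishes). Likewise $A$ and $B$ never see $C$ after $t_1$, so the joint relative configuration of $\{A,B\}$ stays in a bounded window and, the dynamics being deterministic, becomes periodic up to translation; hence $A$ and $B$ too ultimately visit only nodes within bounded distance of a fixed line. Together with the finitely many nodes visited before $t_1$, the visited set of $E$ lies in the union of at most three bounded-width lines, which is co-infinite — contradiction. (Equivalently, $\{A,B\}$ behaves like a two-robot system, which cannot explore the grid by Theorem~\ref{theo:impossibility with 2 robots}, and one combines the node it misses with the finite footprint of $C$.)

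For case (ii) I would use an adaptive adversary that agrees with $E$ until a robot first becomes \emph{well isolated} (at distance $>B$ from both others — by the above this happens by round $t_0$), at which point that robot is \emph{captured}: from then on the adversary applies to it the transformations from Lemma~\ref{lem:a robot alone stay idle}, pinning it near its current node. The aim is to show that all three robots eventually get captured and that captured robots stay confined to pairwise disjoint two-node regions forever, so that only finitely many nodes are visited, contradicting IGE. The main obstacle — and the one genuinely delicate point — is to rule out a captured robot ever being approached within distance $\Phi$ by another robot, which would invalidate the hypothesis of Lemma~\ref{lem:a robot alone stay idle}. This is exactly where ``only three robots'' is used: with $B$ taken large, the positions at which robots become well isolated are pairwise far apart, so an uncaptured robot heading toward a captured one must first pass through a round at which it is itself well isolated, and the adversary captures it there before it can reach the dangerous region; a routine argument then shows the capture process terminates. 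If instead some subset of the robots never becomes well isolated, that subset has at most two robots that never see the captured ones, so — exactly as in case (i) — it runs a bounded-window, eventually-translation-periodic dynamics and contributes only bounded-width lines, again giving a co-infinite visited set. Either way we reach the desired contradiction.
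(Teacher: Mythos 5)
Your proposal is correct and follows essentially the same route as the paper's proof: it combines Lemma~\ref{lem:distance is increasing} (the diameter must diverge), Lemma~\ref{lem:a robot alone stay idle} (an unseen robot can be pinned to two adjacent nodes), and Theorem~\ref{theo:impossibility with 2 robots} (the remaining pair cannot explore alone), with the diameter bound serving as the obstruction to the robots regrouping. Your explicit split into the ``stable pair'' and ``infinitely many all-singleton rounds'' cases, and the buffer-zone argument for capturing an approaching robot before it reaches a pinned one, are just a more careful spelling-out of the steps the paper dispatches in its last two sentences.
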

\begin{proof}
  By Lemma~\ref{lem:distance is increasing}, there is a configuration
  where two robots are always at distance at least $B$ (say $B >
  2\cdot\Phi+2$), so that it is impossible for a robot to see the two
  other robots in the same snapshot. Since there are three robots, at
  least one robot $r$ does not see any other robot. By
  Lemma~\ref{lem:a robot alone stay idle}, if $r$ stays alone, then it
  remains idle or the adversary can make it alternatively moves
  between two nodes infinitely often. Moreover, the two other robots
  cannot explore the grid alone, by Theorem~\ref{theo:impossibility
    with 2 robots}. Now, they cannot both move towards $r$ because in
  such a case the distance between the farthest robots would becomes
  less than $B$, a contradiction.  Finally, if one of the two other
  robots moves towards $r$, at some point all robots are out of the
  visibility range of each other. In that case, the adversary can make
  the exploration fail, by Lemma~\ref{lem:a robot alone stay idle}.
\end{proof}

\begin{theorem}\label{theoremetrois}
    It is impossible to solve the IGE problem with four robots equipped with self-inconsistent compasses that agree on a common chirality.
\end{theorem}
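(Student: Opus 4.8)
The plan is to argue by contradiction: assume some algorithm $\algo{}{}$ solves the IGE problem with four robots having self-inconsistent compasses and a common chirality, and reduce the situation to the already-established impossibilities for two and three robots (Theorems~\ref{theo:impossibility with 2 robots} and~\ref{theo:imp:3}) together with the harmlessness of isolated robots (Lemma~\ref{lem:a robot alone stay idle}). By Lemma~\ref{lem:distance is increasing}, $d_i\to+\infty$, so I fix a bound $B>3\Phi+2$ and a round $N$ after which the diameter always exceeds $B$. Then, from round $N$ on, no robot ever sees the two farthest robots simultaneously, and --- since four robots within visibility range of one another would have diameter at most $3\Phi<B$ --- the four robots are permanently partitioned into at least two \emph{clusters} (maximal sets connected by the ``mutual distance $\le\Phi$'' relation), each of size at most $3$ and hence of diameter at most $2\Phi$. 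Up to relabelling, the cluster profile at any round $\ge N$ is $1{+}1{+}1{+}1$, $2{+}1{+}1$, $3{+}1$, or $2{+}2$; the genuinely new case with respect to Theorem~\ref{theo:imp:3} is $2{+}2$.

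First I would dispatch every profile that contains an isolated robot. Such a robot, by Lemma~\ref{lem:a robot alone stay idle}, either stays idle (this is in fact forced for a well-defined algorithm, since its view is invariant under the $\pi/2$-rotation and therefore $T$ must return $\Idle$) or is kept by the adversary oscillating between two fixed adjacent cells; in either case it is confined to a bounded region and contributes nothing new to the exploration as long as it stays isolated. The exploration must then be carried out by the remaining $\le 3$ robots, and one replays the argument of Theorem~\ref{theo:imp:3}: those robots cannot explore on their own (Theorems~\ref{theo:impossibility with 2 robots} and~\ref{theo:imp:3}); they cannot all drift toward an isolated robot without pulling the diameter below $B$; and as soon as one of them drifts toward an isolated robot, there is a round at which all four robots are mutually invisible, whereupon the adversary freezes everyone (Lemma~\ref{lem:a robot alone stay idle}) and some node is never visited.

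It remains to treat the $2{+}2$ profile: the robots split into two pairs $P$ and $Q$, each of diameter $\le\Phi$, with $P$ and $Q$ more than $B-2\Phi>\Phi$ apart. While this profile persists, $P$ and $Q$ evolve as two independent $2$-robot systems, and the adversary may fix once and for all a consistent transformation for each of their robots, making each pair's internal evolution (relative position of its two robots together with their colors) deterministic. Since a pair of diameter $\le\Phi$ has only finitely many internal states, over any interval longer than that number of rounds during which it remains the same isolated bounded pair it repeats an internal state, hence becomes periodic with some fixed per-period displacement $\delta$, so its two robots thereafter visit only a set of bounded width around a ray in direction $\delta$ (a bounded region if $\delta=0$). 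The only alternatives for such a pair are that it breaks up --- its two robots come to be more than $\Phi$ apart, so each becomes isolated and we fall back to a profile of the previous paragraph, or all four robots become mutually invisible and the adversary wins at once --- or that it merges with the other cluster, which, since a cluster of size $4$ is impossible after round $N$, can only turn $2{+}1{+}1$ into $3{+}1$ and requires closing a gap of order $B$, hence (taking $B$ larger than the number of internal states of a bounded cluster) forces one of the involved clusters into its periodic regime beforehand. Chasing these finitely many transitions shows that, from round $N$ on, every node ever visited lies in a bounded region together with one of finitely many ``thickened lines'' (at most one per cluster that ever enters a non-stationary periodic regime), a proper subset of $\Z\times\Z$ --- contradicting that $\algo{}{}$ solves the IGE problem.

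The step I expect to be the main obstacle is exactly this $2{+}2$ bookkeeping, and more precisely the instability of the cluster profile over time: clusters may split and later re-merge, so one cannot simply invoke ``two independent $2$-robot systems forever''. The crux is to check that every possible change of profile either permanently immobilizes more robots (through Lemma~\ref{lem:a robot alone stay idle}) or is ``slow'' --- it requires closing a gap of order $B$ --- so that, for $B$ fixed large enough in advance, a bounded-diameter sub-cluster is compelled to enter a periodic sweep of a single thickened line before the profile can change again; combining these two effects yields the global ``bounded region plus finitely many lines'' bound on the set of visited nodes.
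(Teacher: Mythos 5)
Your reduction to Theorems~\ref{theo:impossibility with 2 robots} and~\ref{theo:imp:3} breaks down precisely at the scenario that forms the core of the paper's own proof: a \emph{pair} of robots travelling together between two far-apart beacons. In your dispatch of profiles containing an isolated robot you assert that ``as soon as one of them drifts toward an isolated robot, there is a round at which all four robots are mutually invisible''; this is false when \emph{two} robots leave a $3$-cluster together, since they remain within each other's visibility range for the whole journey and Lemma~\ref{lem:a robot alone stay idle} never applies to them. Likewise, invoking Theorem~\ref{theo:impossibility with 2 robots} against the remaining pair in a $2{+}1{+}1$ profile is not legitimate: that theorem's proof requires the two robots to see nothing forever, whereas here the pair periodically meets the confined-but-not-useless isolated robots, which act as beacons that can re-orient the pair and be displaced by each encounter. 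The shuttling pattern $3{+}1 \to 1{+}2{+}1 \to 1{+}3 \to \cdots$, repeated infinitely often, is exactly what a hypothetical algorithm would exploit (it is how the paper's positive algorithms with more robots work), and it is what the bulk of the paper's proof is devoted to ruling out.

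Consequently your final step --- ``chasing these finitely many transitions'' to conclude that the visited set is a bounded region plus finitely many thickened lines --- does not go through: the profile changes infinitely often, each round trip of the moving pair may displace both extremities, and a priori the displacement could depend on the whole history, so the union of the swept strips could grow without bound in two dimensions. What is missing is the argument that the state (colors and relative positions) with which the moving pair arrives at an extremity is independent of how many elementary translations it performed in transit, so that each extremity's response, and hence its long-run trajectory, is determined by a bounded-size configuration and is therefore eventually periodic. The paper establishes this via its Claim~1 (all transit translations are collinear, allowing a common period $t$), its Claim~2 (a lone robot returns to the same color and position after any multiple of $|\Cl|!$ rounds), and the compression map $g$ together with the well-definedness of the map $h$ on compressed configurations; only then does a pigeonhole on bounded configurations yield periodic extremities and confine the visited set to a strip or a cone. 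Your proposal correctly flags this bookkeeping as the main obstacle but does not supply the argument, so the proof is incomplete at its decisive point.
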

\begin{proof}
  Assume, by contradiction, that an algorithm $\mathcal{A}$ solves the
  IGE problem with four robots equipped with self-inconsistent
  compasses that agree on a common chirality.  The outline of the
  proof is as follows. We first prove that, after some time, a moving
  group travels infinitely often between two robots by 
  periodically performing the same translation. This implies that,
  after some time, the movements of the robots depend only on
  configurations of bounded size. From this latter statement, we will
  deduce that the movements of the two farthest robots are periodic,
  leading to a contradiction.

    Like in the previous impossibility results, we consider a round
    where two robots are at distance $B \gg \Phi$, by
    Lemma~\ref{lem:distance is increasing}. The two farthest robots
    are called the {\em extremities} of the configuration. In this
    proof we denote $a$ and $b$ the two extremities, and write
    abusively \emph{the robots at extremity $a$,} (resp. $b$). Using
    the same argument as in Theorem~\ref{theo:imp:3}, we know that at
    some round $t_0$, three robots are located at one extremity and
    the fourth robot is waiting on the other extremity. Since one
    robot cannot travel an arbitrary distance, then, at some round
    $t\geq t_0$, two robots leave one extremity to move towards the
    second one. Two robots traveling from one extremity to the other
    happens infinitely often, otherwise, one extremity remains
    isolated forever and the three other robots solves the IGE
    problems, which is impossible by Theorem~\ref{theo:imp:3}. A group
    of two robots that is moving from one extremity to the other is
    called a \emph{moving group}.

    When two robots are moving, their movements depend on their lights' colors and their relative positions. Since there is a finite number of such combinations (the number of colors and the visibility range are finite), after a finite number of rounds, the sequence of movements performed by the moving group is periodic. During one period, the group performs a translation. After some rounds, the distance between the extremities is such that a moving group traveling from one extremity to the other uses at least one periodic sequence of movements. Again, the number of such periodic sequences of movements, and the number of possible translations, is finite.

    Let $\{t_{1},\ldots, t_{k} \}$ the set of translations that are used infinitely often by a moving group to travel from one extremity to the other. 
\begin{description}
\item[Claim 1:] {\em all $t_i$ are collinear.}

{\em Proof of the Claim:}
    Take two translations $t_1$ and $t_2$ such that, infinitely often \emph{(i)} $t_1$ is used by a moving group to travel from extremity $a$ to extremity $b$, and \emph{(ii)} $t_2$ is used for the return journey to extremity $a$. Let $v_1$, resp. $v_2$, the vector associated with translation $t_1$, resp. $t_2$.
    Fact \emph{(i)}, resp. \emph{(ii)}, implies that there is a bound $d$ such that $a$ and $b$ are at distance at most $d$ from a line directed by vector $v_1$, resp. $v_2$. By lemma~\ref{lem:distance is increasing}, we have a round where the distance between $a$ and $b$ is arbitrary large, thus the angle between the line passing through $a$ and $b$ and any line directed by $v_1$, resp. $v_2$, is smaller than an arbitrary number $\varepsilon>0$. Thus, the angle between a line directed by $v_1$ and a line directed by $v_2$ is null, in other word $v_1$ and $v_2$ are collinear.
\end{description}
Since all $t_i$ are collinear and translations are defined on the discrete grid, we can define the translation $t$ as the \emph{least common multiple} of $t_1,\ldots,t_k$. In more detail, $t$ is such that for all $1\leq i\leq k$, there exists $l_i\in\Z$ verifying $t=t_i^{l_i}$ where $t_i^{l_i}$ is the composition of $t_i$ with itself $l_i$ times. Thus, we can assume that only $t$ and $t^{-1}$ are used infinitely often ($t$ for the outward journey from extremity $a$ to $b$ and $t^{-1}$ for the return journey). 

There might be several different sequences of movements that translate a moving group by translation $t$, each of them having different length (\ie, number of rounds). But, without loss of generality (by taking $t^{|\Cl|!}$ instead of $t$ for instance), we can assume that the length of all such sequences of movements is a multiple of $|\Cl|!$. 
By the following claim, the light's color and the position of a robot that remains alone are unchanged after a moving group performs one or more than one translation.

\begin{description}
\item[Claim 2:] {\em If a robot remains alone during $2M$ rounds, such that $|\Cl|!$ divides $M$, then, its light's color and its position after $M$ and after $2M$ are the same.
}

{\em Proof of the Claim:}
A robot that is alone can change its color, but the sequence of colors is periodic with period at most $|\Cl|$ so every $M$ rounds, since $|\Cl|!$ divides $M$, its light reach back the same color.
Moreover, by Lemma~\ref{lem:a robot alone stay idle}, either Algorithm \algo{}{} always decides that the robot moves, and the adversary can enforce the robot to be at the same location after $M$ rounds ($M$ is a multiple of two), or \algo{}{} decides that the robot moves only a finite number of rounds (which must be at most the number of colors) so after $M$ rounds it stays idle and its position no more changes during the next $M$ rounds.
\end{description}

    We choose the bound $B$ such that every time the moving group travels from one extremity to the other, it performs at least two times a periodic sequence of movements corresponding to translation $t$ or $t^{-1}$. By the previous claim, we know that when a moving group of robots reaches an extremity, the colors of their lights and their relative positions do not depend on the number of times the translation $t$ or $t^{-1}$ was performed. So, every time the moving group reaches an extremity, it does not know how much distance it has traveled.




    We now define a sequence of configurations that captures all the movements of extremity $a$ in the execution of \algo{}{}.
    Let $C_0=\{M_0, a_0,b_0\}$ be the first configuration
from which the two farthest robots are always at distance at least $B$ and 
such that there are a robot $a_0$ at extremity $a$, a robot $b_0$ at extremity $b$, and a group $M_0$ of two robots at bounded distance from $a_0$ (if there are several choices for the extremity $a_0$, we can choose any of them). In such a configuration, we say that {\em there are three robots at ''extremity $a$''}.
    Similarly, let $C_i=\{M_i, a_i,b_i\}$ be the $i^{\text{th}}$ configuration where there are three robots $M_i\cup \{a_i\}$ at the extremity $a$.
    The sequence $(C_i)_{i\in \N}$ captures all the configurations of the execution of \algo{}{} where three robots are located at extremity $a$, hence all the configurations impacting the movement of extremity $a$. Between configurations $C_i$ and $C_{i+1}$, one or more rounds of execution may happen, and we are particularly interested by the case when between $C_i$ and $C_{i+1}$ a moving group travels to extremity $b$ and come back to extremity $a$, which we know happens infinitely often. In this case, the robot left alone at extremity $a$ either stays idle or may be forced to move to an adjacent node by Lemma~\ref{lem:a robot alone stay idle}, but that is uniquely determined by configuration $C_{i}$.

    Fix an index $i$ such that between configurations $C_i$ and $C_{i+1}$ a moving group travels towards $b$ and come back to $a$. When the moving group $M_i$ moves towards $b_i$, it performs $k_i$ times a periodic sequence of movements associated with the translation $t$. Then it reaches $b_i$, and after some more rounds, another group $M_{i}'$ leaves the extremity $b$ and reaches back the extremity $a$. For the return journey, the moving group performs $k_i'$ times the translation $t^{-1}$. As soon as the moving group $M_{i'}$ reaches the extremity $a$, the configuration is $C_{i+1}$.

    Now we prove that the movement of the extremity $a$ is actually determined by a recursive sequence of bounded-size configurations, which implies that the extremity $a$ performs a periodic movement.

    To exhibit such sequence, we observe that the round-trip to extremity $b$ of a moving group would be similar if the extremity $b$ was translated by $t^{-1}$, \ie, if $b$ was closer to extremity $a$ so that one less translation is performed by the moving group to reach extremity $b$. Indeed, from the previous observation, when a moving group reaches extremity $b$, their light's color and relative positions are the same whether the moving robots performed one or more than one translation.
    
    Hence, if the round-trip, between configurations $C_i$ and $C_{i+1}$, uses $k_i$ times the translation $t$ for the outward journey and $k_i'$ time the translation $t^{-1}$ for the return journey, then the round-trip would be similar if $b_i$ was translated by $t^{-k_i^{\min}}$, with $k_i^{\min}=\min(k_i, k_i')-1$. If no round-trip is performed between configurations $C_i$ and $C_{i+1}$, then we can define $k_i^{\min}$ as the minimal number of times we can translate $b_i$ by $t^{-1}$ such that the configuration is contained in a sub-grid of size bounded by $B\times B$. In this case, again, the execution of the algorithm would be similar whether or not $b_i$ is closer to $a_i$. 

    Formally, we define the \emph{compressed} version of $C_i$ denoted $D_i$ as follow: $D_i = \{M_i, a_i, t^{-k_i^{\min}}(b_i)\}$, with $k_i^{\min}=\min(k_i,k_i')-1$ if a moving group does a round-trip to extremity $b_i$ between $C_i$ and $C_{i+1}$. Otherwise, $k_i^{\min}$ is the minimal number of times we can translate $b_i$ so that $D_i$ is contained in a sub-grid of size bounded $B\times B$.

    \newcommand{\kmin}{\mathbf{k}} One can see that, in all cases, the
    configuration $D_i$ is included in a sub-grid of size $B\times
    B$. Indeed, in the first case, it is possible either to travel
    from $a_i$ to $t^{-k_i^{\min}}(b_i)$ using the translation $t$
    only once, or from $t^{-k_i^{\min}}(b_i)$ to $a_i$ using the
    translation $t^{-1}$ only once (either $k_i^{\min} = k_i - 1$ or
    $k_i^{\min} = k_i' - 1$), and by definition of the bound $B$, a
    group cannot travel a distance greater than $B$ using only once
    translation $t$ or $t^{-1}$.  For better readability, let simply
    denote $k_i^{\min}$ by $\kmin_i$.

    We denote by $g$ the function that takes a configuration $\{M, r_a, r_b\}$, where $M\cup \{r_a\}$ are at extremity $a$, and return the configuration $\{M,r_a,t^{-1}(r_b)\}$. $g$ is well-defined for configurations where $t^{-1}(r_b)$ does not intersect with $M\cup \{r_a\}$. Also, we see here that the function $g$ does not depend on the choice of the robot $r_a$ at the extremity $a$. We now have $D_i = g^{\kmin_i}(C_i)$, where $g^{\kmin_i}$ is the composition of $g$ with itself $\kmin_i$ times. We can see in Fig.~\ref{fig:config diagram} the relations between the defined configurations.

  \begin{figure}\centering
    \begin{tikzpicture}
      \matrix (m) [matrix of math nodes,row sep=3em,column sep=4em,minimum width=2em]
      {
         C_i & C_{i+1} &  \\
         D_i & D_{i}' & D_{i+1}\\};
      \path[-stealth]
        (m-1-1) edge node [left] {$g^{\kmin_i}$} (m-2-1)
                edge [] node [below] {$\mathcal{A}^+$} (m-1-2)
        (m-2-1.east|-m-2-3) edge [] node [below] {$\mathcal{A}^+$} (m-2-2)
        (m-2-2) edge [] node [right] {$g^{-\kmin_i}$} (m-1-2)
        (m-1-2) edge node [above right] {$g^{k_{i+1}}$} (m-2-3)
        (m-2-2) edge node [below] {$h$} (m-2-3);
    \end{tikzpicture}
    \caption{Relations the different defined configurations. $\mathcal{A}^+$ corresponds to one or more execution of Algorithm $\mathcal{A}$}\label{fig:config diagram}
\end{figure}

    If we apply the algorithm from a configuration $D_i$ the moving group starts with the same movements as in $C_i$ (since $M_i$ and $a_i$ are unchanged) and reaches the extremity $t^{-\kmin_i}(b_i)$ by
    performing $\kmin_i$ less times the translation $t$, but at least once, so it arrives in the same state as when the moving group reaches $b_i$ starting from configuration $C_i$. So the movement performed by the three robots at this extremity $b$ will be the same as when starting from configuration $C_i$. The return journey to extremity $a$ is also the same but
    performing $\kmin_i$ less times the translation $t^{-1}$
    compared to the return journey when starting with configuration
    $C_i$. When the moving group reaches back the extremity $a$,
    it is impossible for the robots to differentiate between the two
    executions, the one starting with $C_i$ and the other starting with
    $D_i$, the two obtained configurations being $C_{i+1}$ and $D_{i}'=\{M_{i+1}, a_{i+1},
    t^{-\kmin_i}(b_{i+1})\}$.

    We now prove that there exists a function $h$ that maps a configuration $D_{i}'$ to $D_{i+1}$, $\forall i\geq 0$. To show that $h$ is well-defined we have to show that, $\forall i,j\geq 0$ such that $D_i' = D_j'$, then $D_{i+1} = D_{j+1}$. Indeed, assume without loss of generality that $\kmin_j \geq \kmin_i$, then we have
    \[
        C_{i+1} = g^{-\kmin_{i}}(D_i') = g^{-\kmin_{i}}(D_j') = g^{-\kmin_{i}}(g^{\kmin_j}(C_{j+1}))= g^{\kmin_j-\kmin_{i}}(C_{j+1})
    \]
    But this implies that $C_{i+1}$ and $C_{j+1}$ have the same compressed version, \ie, $D_{i+1} = D_{j+1}$.

    Thus, we have that $D_i = h(\mathcal{A}^+(D_{i-1}))$, where $\mathcal{A}^+$ corresponds to one or more executions of Algorithm $\mathcal{A}$, and since the configuration $D_i$ is included in a subgrid of size $B\times B$, and there are a finite number of such configurations up to a translation, then there exists a translation $t_D$ and two indices $q > p$ such that $D_q = t_D(D_p)$. Since the globally oriented view of the extremity $a_i$ is the same in configuration $C_i$ and in the compressed configuration $D_i$, we deduce that the movements of the extremity $a$ are periodic.
    By a symmetric argument, the same is true for the other extremity, so that each extremity is moving in one direction.
    Since the moving group is always at bounded distance from the segment delimited by the two extremities, we have that, \emph{(i)} if the direction of the movements of the extremities are collinear, then no robots can move arbitrary far away from a given line (see Fig.~\ref{fig:collinear extremities}), \emph{(ii)} otherwise, no robots can move arbitrary far away from a given cone (see Fig.~\ref{fig:non collinear extremities}). In both case, not all nodes are visited.

\tikzstyle{robot}=[fill, circle, inner sep=0.04cm]
\tikzstyle{robotm}=[fill, rectangle, inner sep=0.08cm]

\begin{figure}[H]
\begin{minipage}{0.4\textwidth}\centering
    \begin{tikzpicture}
    \fill[pattern color=black!20, pattern=north east lines] (-1,-1) -- (3,-1) -- (3,1) -- (-1, 1) -- cycle;
    \draw[color=black!40] (-1,-1) -- (3,-1);
    \draw[color=black!40] (-1,1) -- (3,1);

    \node[robot,label={90:$a$}] at (0,0) {} edge[->] (-0.6,0);
    \node[robot,label={90:$b$}] at (2,0) {} edge[->] (2.6,0);

    \node[robotm,label={90:$M$}] at (1,0) {};
    \draw (1,0) -- (1.8,0) -- (1.8,0.15) edge[->] (1.6,0.15);

    \node[] at (1,-0.7) {\small visited};
    \node[] at (1,-1.3) {\small not visited};
\end{tikzpicture}
\captionof{figure}{visited nodes if the direction of the extremities are collinear}\label{fig:collinear extremities}
\end{minipage}~~~~~~~~~~%
\begin{minipage}{0.6\textwidth}\centering
    \begin{tikzpicture}
    \fill[pattern color=black!20, pattern=north east lines] (-2,) -- (-0.5,-1) -- (2.5,-1) -- (4, 1) -- cycle;
    \draw[color=black!40] (-0.5,-1) -- (-2,1);
    \draw[color=black!40] (2.5,-1) -- (4,1);
    \draw[color=black!40] (-0.5,-1) -- (2.5,-1);

    \node[robotm,label={90:$M$}] at (1,0) {};
    \draw (1,0) -- (1.8,0) -- (1.9,0.15) edge[->] (1.6,0.15);

    \node[robot,label={90:$a$}] at (0,0) {} edge[->] (-0.5,0.5);
    \node[robot,label={90:$b$}] at (2,0) {} edge[->] (2.5,0.5);

    \node[] at (1,-0.7) {\small visited};
    \node[] at (1,-1.3) {\small not visited};
\end{tikzpicture}
\captionof{figure}{visited nodes if the direction of the extremities are not collinear}\label{fig:non collinear extremities}
\end{minipage}
\end{figure}
\end{proof}

\section{Infinite Grid Exploration with $\Phi=1$} \label{sec:algos}

We now present two algorithms with visibility range one. The former,
Algorithm \algo{1}{Fixed}, uses six robots with three fixed colors. The latter, Algorithm \algo{1}{Modifiable}, uses five robots
with five modifiable colors and achieves exclusiveness.

\subsection{An algorithm using six robots and three fixed colors}
\paragraph{Definition of Algorithm \algo{1}{Fixed}.}

\algoSetRobotColor{F}{green!20!black}
\algoSetRobotColor{L}{blue}
\algoSetRobotColor{B}{red}

\begin{figure}
\centering
  \begin{small}
    \noindent \algoGrid[
\draw[thick] (0,0) -- (4,0) -- (4,-3) -- (0,-3) -- cycle;
\draw[smooth] (0.4,0) to[out=40,in=-170] (1.3,0.5);
\node[] at (2,0.7) {smallest enclosing rectangle};
]{5}{
.,.,.,B,.,
B,.,.,.,.,
.,F,L,.,B,
.,.,B,.,.}{}
  \end{small}
\caption{Initial configuration of  Algorithm \algo{1}{Fixed}.\label{init:algo1}}
\end{figure}
We use the  set of three colors $\Cl=\{L,F,B\}$ to (partially) distinguish robots, \ie, $L$ is the light's color of a robots called \emph{leader}, $F$ is the light's color of a robot called \emph{follower}, and $B$ is the light's color of the four remaining robots, named \emph{beacon} robots.
The initial configuration $I$ of \algo{1}{Fixed} is defined as follows: $I =
\{((-1,0),F), ((0,0),L),$
$((0,-1),B)$, $((2,0),B),$
$((1,2),B), ((-2,1),B)\}$; see Fig.~\ref{init:algo1}.

\algo{1}{Fixed} executes in phases. At the beginning of each phase, we
consider the smallest rectangle, denoted by \SRE, that encloses the
four beacon robots, {\em e.g.}, in the initial configuration $I$
(Fig.~\ref{init:algo1}), the \SRE is drawn with plain lines.  During a
phase, the follower robot $r_F$ explores the borders of the \SRE,
while the leader robot $r_L$ visits the borders of the immediately
smallest inside rectangle.  The group of robots $\{r_L, r_F\}$, called
the {\em moving group}, first moves straightly.  When the leader robot
becomes a neighbor of a beacon robot, the positions of three robots
are {\em adjusted} so that (1) the moving group $\{r_L, r_F\}$ makes a
turn, and (2) the beacon robot moves diagonally in order to expand the
\SRE.  Precisely, at the end of Phase $i$ (and so at the beginning of
Phase $i+1$), both the length and width of \SRE increases by two.

\begin{figure}
\centering    \algoStep{., ., L, F, .}{up} \quad  \algoStep{., L, F, ., .}{up}
  \caption{$\mathcal{R}_{\mathit{straightH}}$ and $\mathcal{R}_{\mathit{straightT}}$.}\label{fig:rs}
\end{figure}
The rules of $\algo{1}{Fixed}$ are defined in
Figs.~\ref{fig:rs},~\ref{fig:turn}, and~\ref{fig:up}.\footnote{A
  summary of all the rules for $\algo{1}{Fixed}$ is also given in
  Fig.~\ref{fig:rulesalgo1}, page~\pageref{fig:rulesalgo1}.} Some
rules aim at moving the group of robots $\{r_L, r_F\}$ straightly and
the others are used to manage an {\em adjustment}.  In the following,
we detail how $\{r_L, r_F\}$ moves straightly toward a beacon robot,
does a left turn, and how the reached beacon robot moves
diagonally. Recall that the rules below also describe the algorithm
behavior on the equivalent, rotated, local views.

Using Rules of Fig.~\ref{fig:rs}, if we apply $\algo{1}{Fixed}$ to
$\{((i,j),L),((i+1,j),F)\}$,  we obtain
$\{((i,j+1),L),((i+1,j+1),F)\}$, \ie, the two robots go through  the
translation
$\tr{0,1}\left(\left\{((i,j),L),((i+1,j),F)\right\}\right)$.

If we rotate the two robots with angle $\pi/2$, resp. $\pi$ and
$3\pi/2$, then the robots will move to the left, down, and right,
respectively (\ie, each round the moving group undergoes a
translation).  So, the group $\{r_L, r_F\}$ moves in straight line
when isolated. Depending on the relative position of $r_L$ and $r_F$,
the group either moves up, left, down, or right.

\begin{figure}[tb]
    \centering
    \begin{subfigure}[t]{0.23\textwidth}
\centering\algoGrid{3}{
    ., ., .,
    ., B, .,
    ., ., .,
    ., L, F}{{},up,up}
    \caption{$\mathcal{R}_{\mathit{straightH}}$ and $\mathcal{R}_{\mathit{straightT}}$ are executed.}\label{fig:algo1-global1}
\end{subfigure} \hfill
\begin{subfigure}[t]{0.23\textwidth}
\centering\algoGrid{3}{
., ., .,
., B, .,
., L, F,
., ., .}{{},{},up}
    \caption{$\mathcal{R}_{\mathit{straightT}}$ is executed.}\label{fig:algo1-global2}
\end{subfigure} \hfill
\begin{subfigure}[t]{0.23\textwidth}
\centering\algoGrid{3}{
., ., .,
., B, F,
., L, .,
., ., .}{right,left}
    \caption{$\mathcal{R}_{\mathit{turnA1}}$ and $\mathcal{R}_{\mathit{turnT1}}$ are executed.}\label{fig:algo1-global3}
\end{subfigure} \hfill
\begin{subfigure}[t]{0.23\textwidth}
\centering\algoGrid{3}{
., ., .,
., F, B,
., L, .,
., ., .}{left,up,left}
    \caption{$\mathcal{R}_{\mathit{straightH}}$, $\mathcal{R}_{\mathit{turnA2}}$, and $\mathcal{R}_{\mathit{turnT2}}$ are executed.}\label{fig:algo1-global4}
\end{subfigure}
    \caption{The globally oriented views of the robots performing a turn. In every illustrations, we assume that the nodes around what is visible are empty.}\label{fig:algo1-global}
\end{figure}

\begin{figure}
\centering \algoStep{., ., B, F, L}{right}\quad \algoStep{., B, F, ., .}{left}
\caption{$\mathcal{R}_{\mathit{turnA1}}$ and
$\mathcal{R}_{\mathit{turnT1}}$.}\label{fig:turn}
\end{figure}

Before giving the rules for the adjustments and in order to explain clearly
how our algorithm works, we show in Fig.~\ref{fig:algo1-global} the
global configurations that occur when the moving group reaches the
upper right beacon robot.
In the first round, the moving group is translated as previously explained. In
the next two rounds, we can remark that the beacon robot moves one node
to the right and one node up. Concurrently, the moving group $\{r_L,r_F\}$ turns
left to reach a configuration from which it moves in straight line
toward the left.

\begin{figure}
\centering \algoStep{., ., F, B, L}{left} \quad \algoStep{., F, B, ., .}{up}
\caption{$\mathcal{R}_{\mathit{turnA2}}$ and
      $\mathcal{R}_{\mathit{turnT2}}$.}\label{fig:up}
\end{figure}
In more details, for the second round, there is no rule when $r_L$ sees a beacon robot, thus, when it happens $r_L$ stops and $r_F$ continues to move up one more time. For the third round, according to the rules of Fig.~\ref{fig:turn},
when $r_F$ only sees the beacon robot, it moves towards it, and when the beacon sees both $r_F$ and $r_L$, it moves toward $r_F$, so that they exchange their positions, while  $r_L$ stays idle.
Finally, the beacon robot makes a last
move up, and the moving group moves away from the beacon, according to the rules of Fig.~\ref{fig:up}.
With those rules, and with $M=\{((i,j),L), ((i+1,j),F)\}$,
$X=\{((i,j+1),B)\}$, we can see that by applying $\algo{1}{Fixed}$ three times starting from  $\{M,X\}$ we obtain $\{((i-1,j),L), ((i-1,j+1),F),
((i+1,j+2),B)\}$, {\em i.e.}, $\{\rho(M),\;\tr{1,1}(X)\}$, where $\rho$ is the rotation centered at $(i-0.5, j-0.5)$ of angle $\pi/2$.

\begin{theorem}
    Algorithm $\algo{1}{Fixed}$ solves the IGE problem using six robots and fixed colors having common chirality and a visibility range of one.
\end{theorem}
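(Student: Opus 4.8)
# Proof Proposal

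The plan is to establish that $\algo{1}{Fixed}$ solves the IGE problem by showing two things: (1) the execution maintains an invariant describing a well-structured configuration at the start of each phase, and (2) under this invariant, every phase strictly enlarges the explored region in a controlled way, so that the union over all phases covers $\Z\times\Z$. First I would formalize the \emph{phase invariant}: at the start of Phase $i$, the four beacon robots sit at the four corners of the rectangle \SRE, whose dimensions have grown by $2$ in each direction compared to Phase $i-1$, and the moving group $\{r_L,r_F\}$ is positioned on a border of \SRE\ just after completing a turn, configured to travel straight along that border. The base case is the initial configuration $I$ of Fig.~\ref{init:algo1}, which I would check directly satisfies the invariant (possibly after a small constant number of preliminary rounds).

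The core of the argument is the \emph{inductive step}, which splits into two regimes analyzed using the rules established earlier in the excerpt. In the \emph{straight-motion regime}, I would invoke the already-proven fact that applying $\algo{1}{Fixed}$ to an isolated moving group $\{((i,j),L),((i+1,j),F)\}$ yields the translate $\tr{0,1}(\cdot)$ (and the rotated versions for the other three directions). Since the beacon robots are at Manhattan distance more than $\Phi=1$ from the moving group during straight motion (this itself needs a short argument from the invariant, using that \SRE\ has side length at least the small constant it starts with), the moving group indeed translates along the border of \SRE, while each beacon robot, seeing no one within range, stays idle — here I would appeal to the fact that there is no rule firing for an isolated beacon, or argue it from the rule set directly. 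Thus $r_F$ visits every node of the border of \SRE\ and $r_L$ visits every node of the next rectangle inside, over the course of the four sides. In the \emph{adjustment regime}, when $r_L$ becomes a neighbor of a corner beacon, I would use the already-computed three-round transformation: from $\{M,X\}$ with $M=\{((i,j),L),((i+1,j),F)\}$ and $X=\{((i,j+1),B)\}$ one obtains $\{\rho(M),\tr{1,1}(X)\}$, i.e., the moving group makes a $\pi/2$ turn and the beacon moves diagonally outward, enlarging \SRE\ by $2$ in both dimensions as claimed. Chaining the four corner adjustments around one full loop re-establishes the invariant for Phase $i+1$ and confirms the "$+2$ in length and width per phase" claim.

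From the invariant and the per-phase coverage, the exploration conclusion follows: after Phase $i$, the moving group has traced out the border of the \SRE\ at Phase $i$ together with the immediately interior rectangle; since \SRE's half-dimensions increase without bound (by $1$ each phase in each of the four directions from a center that stays within a bounded region — one must check the center does not drift, or drifts at most linearly slower than the growth), every node $(x,y)\in\Z\times\Z$ lies on the border of \SRE\ (or its interior rectangle) for all sufficiently large $i$, hence is visited. I would phrase this as: for each $(x,y)$ there is a phase $i$ by which both $|x|$ and $|y|$ are dominated by the corresponding half-dimension of \SRE, and the border-plus-interior-rectangle swept during Phase $i$ and Phase $i+1$ together cover all lattice points between the Phase-$i$ and Phase-$(i+1)$ rectangles, which includes $(x,y)$.

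The main obstacle I anticipate is the careful case analysis needed to verify that \emph{no unintended rule fires} — in particular, that during straight motion the moving group never accidentally comes within visibility range of a beacon robot on the \emph{wrong} side or at the \emph{wrong} corner, and that the $\pi/2$-rotated and $\pi$- and $3\pi/2$-rotated instances of the rules behave consistently so the turn always goes the correct way (this is where the common chirality assumption is essential — without it, the mirrored views would make left-turns and right-turns indistinguishable and the rectangle would not close up). A secondary subtlety is handling the transition rounds between phases where the local views momentarily contain a beacon robot: I must confirm the rule set of Figs.~\ref{fig:turn} and~\ref{fig:up} is exhaustive for exactly the views that arise and that the robots' lights (being fixed) suffice to disambiguate $r_L$, $r_F$, and beacons in every such view. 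Once these verifications are in place, the invariant-plus-coverage skeleton above delivers the theorem.
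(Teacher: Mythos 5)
Your proposal is correct and follows essentially the same route as the paper's proof: the paper likewise defines the phase configurations $C^i$ explicitly (as symmetric translations $\tr{\pm i,\pm i}$ of the initial beacon and moving-group positions, which in particular keeps the center fixed), uses the straight-line translation fact and the three-round corner transformation $\{M,X\}\mapsto\{\rho(M),\tr{1,1}(X)\}$ to show $C^i$ leads to $C^{i+1}$, and concludes by observing that the union of the swept rectangle borders is $\Z\times\Z$. The only cosmetic difference is that you phrase the per-phase structure as an invariant with the beacons ``at the corners'' of \SRE{} (they actually each sit on a distinct side, one step from a corner), whereas the paper tracks exact coordinates and round counts; this does not change the substance of the argument.
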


\begin{figure}
   \centering \algoGrid{5}{
.,.,.,B,.,
B,.,.,.,.,
.,.,.,L,B,
.,.,.,F,.,
.,B,.,.,.}{}[%
\draw[-{Latex[width=2mm]}] (N1-2) -- (N1-3);
\draw[-{Latex[width=2mm]}] (N1-3) -- (N2-3);
\draw[-{Latex[width=2mm]}] (N2-3) -- (N3-3);
\draw[-{Latex[width=2mm]}] (N2-2) -- (N3-2);
\draw[-{Latex[width=2mm]}] (1.8,-3.15) -- (1.2,-3.15);
\draw[-{Latex[width=2mm]}] (1.2,-3.15) -- (1.2,-3.7);
\draw plot [smooth cycle, tension=0.4] coordinates {
(0.8,-1.7) (0.8,-2.3) (2.2,-2.3) (2.2,-1.7)};
\draw plot [smooth cycle, tension=0.4] coordinates {
(2.6,-1.7) (2.6,-3.3) (3.4,-3.3) (3.4,-1.7)};
\node[draw=black, circle, fill, inner sep=1pt,label={180:$\rho$}] at (2.5,-1.5) {};
\draw[thick,red,-{Latex[width=2mm,length=1mm]}] ([shift=(205:0.3cm)]2.5,-1.5) arc (205:280:0.3cm);
    ]
\caption{The globally oriented view  after three rounds  from $C^0$.}\label{fig:global}
\end{figure}
\begin{proof}
  We denote the initial configuration $I = C^0 = \{M^0, C_0^0, C_1^0,
  C_2^0, C_3^0\}$, where $M^0 =\{((-1,0),F),((0,0),L)\}$,
  $C_0^0=\{((0,-1),B)\}$, $C_1^0=\{((2,0),B)\}$,
  $C_2^0=\{((1,2),B)\}$, and $C_3^0=\{((-2,1),B)\}$.

We define the configuration $C^i=\{M^i,
C_0^i, C_1^i, C_2^i, C_3^i\}$ in Phase $i$, where $ M^i =
\tr{-i,-i}(M^0)$, $C_0^i = \tr{-i,-i}(C_0^0)$, $ C_1^i =
\tr{i,-i}(C_1^0)$, $ C_2^i = \tr{i,i}(C_2^0)$, and $ C_3^i =
\tr{-i,i}(C_3^0)$.
We now prove that starting with a configuration $C^i$, the configuration $C^{i+1}$ is eventually reached. Since the initial configuration of
our algorithm is $C^0$, this implies that every configuration $C^i$,
for every $i \geq 0$, is gradually reached.  By doing so, the leader
robot visits all edges of growing rectangles.  Consider the first
configuration $C_i$ of Phase $i$. In $C^i$, the distance between $r_L$
and the beacon robot on its right is $2i+2$.  Indeed, starting from
$C^i$, the robot $r_L$ starts from $(-i,-i)$ and that beacon robot
starts from $(i+2,-i)$.  By executing the algorithm, we can remark
(see Fig.~\ref{fig:global}) that after three rounds (1) the
configuration is $\{\rho(M^i), C^{i+1}_0,C_1^i, C_2^i, C_3^i\}$ (where
$\rho$ is the rotation with center $(0.5; 0.5)$ of angle $\pi/2$) and
(2) $r_L$ is at distance $2i+1$ from the bottom down beacon. From that
point, the moving group $\{r_L, r_F\}$ starts moving one node to the
right at each round (due to the first two rules) until robot $r_L$
sees a beacon robot $r$ in $C_1^i$; this event occurs at round $3+2i$,
\ie, three plus the number of empty nodes between $r_L$ and $r$. After
three more rounds, the moving group performs a left turn again and
bottom right beacon robot is translated by a vector $(1,-1)$.

Thus, at round  $3+2i+3$, the configuration is $\{\tr{2i,0}(\rho^2(M^i)), C^{i+1}_0,C_1^{i+1}, C_2^i, C_3^i\}$. After $2i+3$ more rounds, the moving group reaches the top right beacon robot, and performs another left turn. So at round $3+2(2i+3)$ the configuration is
$\{\tr{2i,2i}(\rho^3(M^i)), C^{i+1}_0,C_1^{i+1}, C_2^{i+1}, C_3^i\}$. Similarly, at round $3+3(2i+3)+1$ the configuration is $\{\tr{-1,2i}(\rho^4(M^i)), C^{i+1}_0,C_1^{i+1}, C_2^{i+1}, C_3^{i+1}\}$. We observe that the moving group  $\{r_L, r_F\}$ required one extra round (as compared to other beacon robots) to reach the beacon robot in $C_3^i$.

Then, after $2i+1$ more rounds, the group of robots $\{r_L, r_F\}$ moves $2i+1$ nodes down
to reach the bottom left beacon robot again, so that, at round $(3+3(2i+3)+1)+2i+1$, the configuration is $\{\tr{-1,-1}(\rho^4(M^i)),$ $ C^{i+1}_0, C_1^{i+1}, C_2^{i+1}, C_3^i\}=C^{i+1}$.

Recursively, if the robots start from configuration $C^0$, they reach configuration $C^i$ in finite time, for any $i\geq 0$. Also, the nodes $V_i$ visited by $r_L$ between Phase $i$ and Phase $i+1$ contains the edges of the rectangle $\left\{\tr{-i,-i}(-1,0), \tr{i,-i}(1,0), \tr{i,i}(1,1), \tr{-i,i}(-1,1)\right\}$; see Fig.~\ref{fig:visit2}.
Since $\bigcup_{i\geq0} V_i = \Z\times\Z$, our algorithm solves the
infinite grid exploration problem.
\end{proof}

\setFigureTextSize{\tiny}
\setFigureScale{0.5}
\tikzstyle{algonode}=[circle,draw=black,semithick, minimum width=1.0em, minimum height=1.0em, inner sep=0.1em]
\setFigureTextSize{\tiny}

\tikzstyle{algonode-small}=[draw=black, circle, semithick, minimum width=0em, minimum height=0em, inner sep=0.00em]

\begin{figure}[htpb]
\centering
    \algoGrid[
\def\figoffset{0.05}
\foreach \i in {-1,...,2}
{
    \draw[thick] (2-\figoffset-\i,-5-\figoffset-\i)
     -- (6-\figoffset+\i,-5-\figoffset-\i)
     -- (6-\figoffset+\i,-2-\figoffset+\i)
     -- (2-\figoffset-\i,-2-\figoffset+\i)
     -- cycle;
}
\draw[smooth] (4.95,-3.05) to[out=45,in=-150] (12,-0.8);
\node[] at (17.8,-0.8) {Visited by $r_L$ between Phase 0 and 1};
\draw[smooth] (5.95,-2.7) to[out=-10,in=-150] (12,-2);
\node[] at (17.8,-2) {Visited by $r_L$ between Phase 1 and 2};
\draw[smooth] (6.95,-3.7) to[out=-10,in=-150] (12,-3.2);
\node[] at (17.8,-3.2) {Visited by $r_L$ between Phase 2 and 3};
\draw[smooth] (7.95,-4.7) to[out=-10,in=-150] (12,-4.4);
\node[] at (17.8,-4.4) {Visited by $r_L$ between Phase 3 and 4};
]{9}{
.,.,.,.,.,.,.,.,.,
.,.,.,.,.,.,.,.,.,
.,.,.,.,.,B,.,.,.,
.,.,B,.,.,.,.,.,.,
.,.,.,F,L,.,B,.,.,
.,.,.,.,B,.,.,.,.,
.,.,.,.,.,.,.,.,.,
.,.,.,.,.,.,.,.,.}{}
\caption{Visited grid after three phases for $\algo{1}{fixed}$.}\label{fig:visit2}
\end{figure}

\subsection{An algorithm using five robots and five modifiable colors}

\resetAlgoNodeStyle{}
\setFigureScale{0.85}
\setVisibilityRange{1}

Algorithm \algo{1}{Modifiable} solves the exclusive IGE
using a minimum number of robots. Here, to use one less robot, the
moving group of two robots moves along a triangle, delimited by three beacon robots, instead of a
rectangle like in the previous algorithm.  Except for the shape of the
growing polygonal, the principles are similar to the previous
algorithm.  Notice that we require modifiable colors to allow the
moving group to follow a diagonal.

The fact that the rules are well-defined and unambiguous has been
checked by computer, along with all the transformations that occur
when the robots are in a given configuration. For instance, the fact
that, after a given number of rounds, each beacon at the corner of the triangle has
been translated is verifiable by executing the algorithm in our
complementary material~\cite{quentin_bramas_2019_2625730}.

\algoSetRobotColor{P}{purple!20!black}
\algoSetRobotColor{G}{green!50!black}
\algoSetRobotColor{Y}{yellow!70!black}
\algoSetRobotColor{R}{red!80!black}
\algoSetRobotColor{B}{blue!70!red}

\begin{figure}
\centering
\algoGrid[
\draw[thick] (0,-1) -- (3,-4) -- (2,0) -- cycle;
]{4}{
.,.,R,.,
G,.,.,.,
.,B,.,.,
.,Y,.,.,
.,.,.,Y}{}
\caption{$I$ for \algo{1}{Modifiable}.}\label{fig:init55}
\end{figure}
The set of colors is $\Cl=\{R,Y,G,B,P\}$. Notice that, to reduce the
number of used colors, the meaning of each color changes according to
the stage of the exploration, {\em i.e.}, along the exploration they
are used for different purposes. The initial configuration $I$ is
given in Fig.~\ref{fig:init55}. The three beacon robots are at the
corner of the growing triangle respectively hold light's colors $Y$,
$G$, and $R$. The principle of the algorithm is as follows: starting
from the initial configuration $I$ and using the diagonal movements
described in Fig.~\ref{fig:algo2-sequence-diagonal-move}, the moving
group, composed of the two robots initially with lights colored $B$
and $Y$, goes to the bottom beacon robot $Y$.  The color of the light
of the robot in the moving group initially colored $Y$ alternates at
each move between $Y$ and $P$, while the light of the robot initially
colored $B$ has a constant color.  Robots in the group alternatively
move horizontally and vertically (when one moves horizontally, the
other moves vertically) according to the lights' colors of the group,
either $\{B,Y\}$ or $\{B,P\}$.  After the turn at the bottom beacon
robot, described in Fig.~\ref{fig:algo2-sequence-turn-bottom}, the
lights of the moving group are now colored $G$ and $B$ and the group
moves with fixed colors in the exact same way as in the
previous algorithm, until reaching the third beacon robot. Precisely,
they move up towards the top right beacon robot, turns left, and then
moves straight to the left towards the third beacon robot, following
the rules of the previous algorithm given in
Figs.~\ref{fig:rs}-\ref{fig:turn}. Upon reaching the third beacon
robot, the robots perform a turn following the sequence described in
Fig.~\ref{fig:algo2-sequence-turn-top-left}.  After the turn at the
top left beacon robot, the lights of the moving group have again
colors $B$ and $Y$ and again moves in diagonal. All rules are given in
Fig.~\ref{fig:algomodif}.

\begin{figure}[tb]\centering
    \noindent\algoGrid{3}{
B,.,.,
Y,.,.,
.,.,.}{down,right}[
\node at (0.5,-0.75) {\footnotesize P};
]
\noindent\algoGrid{3}{
.,.,.,
B,P,.,
.,.,.}{right,down}[
\node at (1.3,-1.5) {\footnotesize Y};
]
\noindent\algoGrid{3}{
.,.,.,
.,B,.,
.,Y,.}{down,right}[
\node at (1.5,-1.75) {\footnotesize P};
]
\caption{Sequence of configurations that corresponds to the diagonal move. The letter written near each arrow defines the new color of the lights of the moving robot.}\label{fig:algo2-sequence-diagonal-move}
\end{figure}

\begin{figure}[tbh]\centering
    \noindent\algoGrid{3}{
B,.,.,
Y,Y,.,
.,.,.,
.,.,.}{down,right,down}[
\node at (0.5,-0.75) {\footnotesize P};
]
\noindent\algoGrid{3}{
.,.,.,
B,P,.,
.,Y,.,
.,.,.}{right,down,down}[
\node at (1.3,-1.5) {\footnotesize Y};
\node at (1.3,-2.5) {\footnotesize P};
]
\noindent\algoGrid{3}{
.,.,.,
.,B,.,
.,Y,.,
.,P,.}{down,left,right}[
\node at (0.5,-1.75) {\footnotesize G};
\node at (1.5,-2.75) {\footnotesize Y};
]
\noindent\algoGrid{3}{
.,.,.,
.,.,.,
G,B,.,
.,.,Y}{up,up}
\caption{Sequence of configurations that corresponds to the turn at the bottom beacon robot.}\label{fig:algo2-sequence-turn-bottom}
\end{figure}

\begin{figure}[h]\centering
    \noindent\algoGrid{4}{
.,.,.,B,
.,.,G,G,
.,.,.,.}{left,left,left}[
\node at (2.6,-0.75) {\footnotesize Y};
]
\noindent\algoGrid{4}{
.,.,B,.,
.,G,Y,.,
.,.,.,.}{down,left,left}[
\node at (1.6,-0.75) {\footnotesize P};
]
\noindent\algoGrid{4}{
.,.,.,.,
G,P,B,.,
.,.,.,.}{up,down,left}[
\node at (1.25,-1.5) {\footnotesize Y};
]
\noindent\algoGrid{4}{
G,.,.,.,
.,B,.,.,
.,Y,.,.}{{},down,right}[
\node at (1.5,-1.75) {\footnotesize P};
]
\caption{Sequence of configurations that corresponds to the left turn at the top left beacon robot.}\label{fig:algo2-sequence-turn-top-left}
\end{figure}

\resetAlgoNodeStyle{}
\setFigureScale{0.7}
\setVisibilityRange{1}

\begin{figure}[H]

\algoSetRobotColor{P}{purple!20!black}
\algoSetRobotColor{G}{green!50!black}
\algoSetRobotColor{Y}{yellow!70!black}
\algoSetRobotColor{R}{red!80!black}
\algoSetRobotColor{B}{blue!70!red}

For the first moves, we use the same rules as the previous algorithm, allowing two robots to move in straight line toward a beacon, turn left, and move in straight line towards the second beacon:\\
\setFigureScale{0.75}%
\algoStep{., ., G, B, .}{up}
\algoStep{., G, B, ., .}{up}
\algoStep{., R, B, ., .}{left}
\algoStep{., ., R, B, G}{right}
\algoStep{., B, R, ., .}{up}
\algoStep{., ., B, R, G}{left}\\
Rules for the second turn:\\
\algoStep{B, G, G, ., .}{left}[Y]
\algoStep{., ., G, G, .}{left}
\algoStep{B, G, Y, ., .}{left}[P]
\algoStep{., ., B, ., Y}{down}
\algoStep{., ., G, P, .}{up}
\algoStep{., G, P, B, .}{down}[Y]\\
Rules for the diagonal move:\\
\algoStep{B, ., Y, ., .}{right}[P]
\algoStep{., B, P, ., .}{down}[Y]
\algoStep{., ., B, P, .}{right}\\
For the last turn:\\
\algoStep{B, ., Y, Y, .}{right}[P]
\algoStep{., Y, Y, ., .}{down}
\algoStep{., B, P, ., Y}{down}[Y]
\algoStep{P, ., Y, ., .}{down}[P]
\algoStep{Y, ., P, ., .}{right}[Y]
\algoStep{B, ., Y, ., P}{left}[G]
\algoStep{., ., G, Y, .}{left}
\caption{Rules for Algorithm \algo{1}{Modifiable}.}\label{fig:algomodif}
\end{figure}


\bigskip

\begin{theorem}
    Algorithm $\algo{1}{Modifiable}$ solves the IGE problem using five robots, five modifiable colors, and a visibility range of one.
\end{theorem}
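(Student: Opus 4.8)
The plan is to follow the same scheme as the proof for $\algo{1}{Fixed}$: exhibit an infinite sequence of configurations $(C^i)_{i\in\N}$ with $C^0 = I$, such that each $C^{i+1}$ is reached from $C^i$ after finitely many rounds, and such that between $C^i$ and $C^{i+1}$ the leader robot traverses the boundary of a growing triangle $T_i$; then conclude by showing $\bigcup_{i\ge 0} T_i = \Z\times\Z$. First I would fix notation: write $I = C^0 = \{M^0, Y^0, G^0, R^0\}$ where $M^0$ is the moving group of two robots (colours $B$ and $Y$ initially) sitting next to the top-left beacon, and $Y^0, G^0, R^0$ are the three beacon robots at the corners of the initial triangle; for a phase index $i$, define $C^i = \{M^i, Y^i, G^i, R^i\}$ where each beacon is its position in $C^0$ displaced $i$ times along its own "outward" direction and $M^i$ is the matching translate of the moving group, so that the triangle delimited by the three beacons grows by a constant amount each phase.

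Then I would trace one phase. Starting from $C^i$ the moving group has colours $\{B,Y\}$ and walks diagonally towards the bottom beacon $Y^i$ using the three-rule cycle of Fig.~\ref{fig:algo2-sequence-diagonal-move}; here I would check that one diagonal step costs a fixed number of rounds, translates the group by the diagonal vector, and restores its colours, so that after a number of steps proportional to $i$ the group becomes adjacent to $Y^i$. The turn sequence of Fig.~\ref{fig:algo2-sequence-turn-bottom} then translates $Y^i$ by one diagonal step (expanding the triangle), makes the group turn, and hands it colours $\{G,B\}$. From there the rules borrowed from $\algo{1}{Fixed}$ (Figs.~\ref{fig:rs}--\ref{fig:turn}) take over: the group moves straight up to the top-right beacon $R^i$, turns left (displacing $R^i$ outward), and moves straight left towards the top-left beacon $G^i$; upon reaching it, the sequence of Fig.~\ref{fig:algo2-sequence-turn-top-left} displaces $G^i$ outward, turns the group, and restores colours $\{B,Y\}$. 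Counting the rounds of each segment shows the resulting configuration is exactly $C^{i+1}$. Since the rule table is well-defined and unambiguous and every intermediate transformation is the claimed one (both verified by computer with the companion material~\cite{quentin_bramas_2019_2625730}), and since the robots share a common chirality, the same reasoning applies verbatim to every rotated copy of these local views, so the adversary's choice of $f\in\ITC$ is irrelevant.

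Finally I would conclude by induction: every $C^i$ is reached in finite time, and between $C^i$ and $C^{i+1}$ the leader visits all edges of the triangle $T_i$ whose vertices are the three beacon positions of $C^i$ (shrunk inward by one, analogously to $V_i$ in the previous proof). Because the three corners of $T_i$ are displaced along three pairwise non-parallel directions as $i$ grows, the union $\bigcup_{i\ge 0} T_i$ covers every node of $\Z\times\Z$; hence every node is eventually occupied and the IGE problem is solved.

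The main obstacle I anticipate is the bookkeeping rather than any conceptual difficulty: getting the exact round counts during the diagonal walk right (because the colour of one robot of the moving group alternates between $Y$ and $P$, a diagonal step is a short sub-sequence rather than a single round), and choosing the three beacons' outward displacements so that the growing triangles genuinely leave no node uncovered. Both are finite, mechanical checks, and the computer-assisted verification of the rule table already discharges the ambiguity and well-definedness concerns.
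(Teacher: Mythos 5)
Your proposal is correct and follows essentially the same route as the paper's own proof: both define the phase configurations $C^i$ as translates of the initial moving group and the three beacons (the paper uses the explicit vectors $(-2i,i)$, $(i,-2i)$ and $(i,i)$), trace one phase through the diagonal walk, the bottom turn, the two straight segments with the borrowed $\algo{1}{Fixed}$ rules, and the top-left turn, and conclude by induction that the union of the growing triangles' boundaries covers $\Z\times\Z$. The only detail the paper adds that you elide is the explicit check of exclusiveness (no shared edge or node), which is not required by the statement as phrased.
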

\begin{proof}
  First, the fact that using \algo{1}{Modifiable}, robots never
  simultaneously cross the same edge is direct from the definition of
  the rules. In the following, we also show that no two robots ever
  simultaneously occupy the same node and every node is eventually
  visited.  To that goal, we define a sequence of configurations
  corresponding to phases of our algorithm.  We assume that the robot
  with light colored $G$ is initially at position $(0,0)$. So, the
  initial configuration $I$ is defined as follows: $I= C^0= \{M^0,
  C_0^0, C_1^0, C_2^0, C_3^0\}$, where $M^0 =
  \{((1,-1),B),((1,-2),Y)\}$, $C_0^0=\{((0,0),G)\}$,
  $C_1^0=\{((3,-3),Y)\}$, $C_2^0=\{((2,1),R)\}$. The configuration at
  Phase $i$, denoted $C^i = \{M^i, C_0^i, C_1^i, C_2^i\}$, with $i>0$,
  is obtained by several translations, {\em i.e.}, $ M^i =
  \tr{-2i,i}(M^0)$, $C_0^i = \tr{-2i,i}(C_0^0)$, $C_1^i =
  \tr{i,-2i}(C_1^0)$, and $C_2^i = \tr{i,i}(C_2^0)$.

Starting from Configuration $C^i$, the moving group $M^i$ moves
diagonally towards the robot in $C_1^i$ and reaches the first
configuration of the sequence described in
Fig.~\ref{fig:algo2-sequence-turn-bottom}. After the turn, the robot
in $C_1^i$ is translated by a vector $(1,-2)$. Then, the moving group
moves up towards the top-right beacon robot. Next, the moving group turns
left, and the robot in $C_2^i$ is translated by a vector
$(1,1)$. Finally, the moving group reaches the top-left beacon robot and forms
the first configuration of the sequence described in
Fig.~\ref{fig:algo2-sequence-turn-top-left}. After the execution of
this sequence, the robot in $C_0^i$ is translated by vector $(-2,1)$
and the moving group is translated by the same vector with respect to
its position at the beginning of the phase.  At this point, the
configuration is exactly $C^{i+1}$.

Recursively, the execution reaches $C^i$, for all $i\geq 0$, in finite
time. Between Phase $i$ and $i+1$, the nodes located at the edges of
the triangle $\left\{\tr{-2i,i}(0,0), \tr{i,-2i}(2, -2), \tr{i,i}(2,0)\right\}$ are visited; see
Fig.~\ref{fig:visit1}.
\end{proof}

\setFigureScale{0.5}
\begin{figure}[tb]
\tikzstyle{algonode}=[circle,draw=black,semithick, minimum width=1.0em, minimum height=1.0em, inner sep=0.1em]
\setFigureTextSize{\tiny}

\tikzstyle{algonode-small}=[draw=black, circle, semithick, minimum width=0em, minimum height=0em, inner sep=0.00em]

\begin{center}
    \algoGrid[
\def\figoffset{0.05}
\begin{scope}
\clip (0,0) rectangle (10,-5);
\foreach \i in {0,...,2}
{
    \draw[thick] (4-\figoffset-2*\i,-2-\figoffset+\i) -- (6-\figoffset+\i,-4-\figoffset-2*\i) -- (6-\figoffset+\i,-2-\figoffset+\i) -- cycle;
}
\end{scope}
\begin{scope}
\clip (0,-5) rectangle (10,-6.1);
\foreach \i in {0,...,2}
{
    \draw[thick,dashed] (4-\figoffset-2*\i,-2-\figoffset+\i) -- (6-\figoffset+\i,-4-\figoffset-2*\i) -- (6-\figoffset+\i,-2-\figoffset+\i) -- cycle;
}
\end{scope}
\draw[smooth] (6,-2.5) to[out=-10,in=-150] (10,-2);
\node[] at (15.2,-2) {Nodes visited between Phase 0 and 1};
\draw[smooth] (7,-3.5) to[out=-10,in=-150] (10,-3.2);
\node[] at (15.2,-3.2) {Nodes visited between Phase 1 and 2};
\draw[smooth] (8,-4.5) to[out=-10,in=-150] (10,-4.4);
\node[] at (15.2,-4.4) {Nodes visited between Phase 2 and 3};
]{9}{
.,.,.,.,.,.,.,.,.,
.,.,.,.,.,.,R,.,.,
.,.,.,.,G,.,.,.,.,
.,.,.,.,.,B,.,.,.,
.,.,.,.,.,Y,.,.,.,
.,.,.,.,.,.,.,Y,.}{}
\end{center}
\caption{Visited grid after three phases for $\algo{1}{Modifiable}$.}\label{fig:visit1}
\end{figure}

\section{Infinite Grid Exploration with $\Phi=2$ and no lights}\label{sec:visibility2}

\algoSetRobotColor{R}{black}

\setFigureScale{0.7}
  \begin{figure}[tb]
\hfill  \algoGrid{5}{
.,.,.,.,.,
.,.,.,R,.,
.,.,.,.,.,
.,R,R,.,.,
.,R,.,.,.}{{},up,up,up}\hfill
    \algoGrid{5}{
.,.,.,.,.,
.,.,.,R,.,
.,R,R,.,.,
.,R,.,.,.,
.,.,.,.,.}{up,up,right,up}\hfill
    \algoGrid{5}{
.,.,.,R,.,
.,R,.,.,.,
.,R,.,R,.,
.,.,.,.,.,
.,.,.,.,.}{right,{},{},left}\hfill
    \algoGrid{5}{
.,.,.,.,R,
.,R,.,.,.,
.,R,R,.,.,
.,.,.,.,.,
.,.,.,.,.}{{},left,left,left}\hfill
\caption{Sequence of configurations that corresponds to a left turn.}\label{fig:turnleft}
  \end{figure}

\setFigureTextSize{\footnotesize}
\setVisibilityRange{2}

We now describe Algorithm \algo{2}{no lights} which solves the
exclusive IGE problem assuming visibility range two, yet without using
any color (or equivalently, using the same non-modifiable color for all robots), \ie, using
anonymous oblivious robots.  One can observe that when the visibility
range is two (or more) the obstructed visibility can impact the local
view of a robot as a robot at distance one can hide a robot behind it
at distance two. So, the rules of \algo{2}{no lights} should not
depend on the states of the nodes that are hidden by a robot. To make
it clear, those nodes will be crossed out in the illustrations of our
rules.  The principle of our algorithm is similar to the first two
algorithms.  We still proceed by phases. In Phase $i$ ($i \geq 0$), a
moving group, this time of three robots, traverses the edges of a
square of length $2i$; see Fig.~\ref{fig:visit3}.  The three {\em
  moving} robots are always placed in such a way that exactly one of
them, the {\em leader}, has one robot of the group on its horizontal
axis and the other on its vertical axis. Again, the two non-leader
robots of the group are called the {\em followers}. Notice however
that the leadership changes during a phase. Finally, as previously,
the non-members of the moving group are called the {\em beacon}
robots.

\begin{figure}
\begin{center}
        \algoGrid{6}{
.,R,.,.,.,.,
.,.,.,.,.,R,
.,.,.,R,R,.,
.,.,.,R,.,.,
.,.,.,.,.,.,
R,.,.,.,.,.,
.,.,.,.,.,R}{}
\end{center}
\caption{$I$ for \algo{2}{no lights}.}\label{fig:ICnl}
\end{figure}

 The overall idea is that the moving group moves
 straightly according to the relative positions of its members until a
 follower detects a beacon at distance two. Then, an adjustment is
 performed in two rounds to push away the beacon and to make the
 moving group turn left.

The initial configuration is given in Fig.~\ref{fig:ICnl} and the rules are given in Figs.~\ref{fig:line},~\ref{fig:firstr} and~\ref{fig:secondr}.\footnote{A summary of all the rules for $\algo{2}{no lights}$ is also given in Fig.~\ref{fig:rulesalgo3}, page~\pageref{fig:rulesalgo3}.} During Phase $i$, the visited square is actually the one of length $2i$ whose
center is the initial position of the bottom follower.

For the movements along a straight line, the moving group forms a right angle. Each of the three moving robots
sees the others, can determine its position in the group, and knows
the current direction to follow. The rules of Fig.~\ref{fig:line}
manage the movements of the moving group along a straight line.

\begin{figure}
\centering
\algoStep{., ., ., ., ., ., R, R, ., ., R, ., .}{up}
\algoStep{., ., R, R, ., ., R, ., ., ., ., ., .}{up}
\algoStep{., ., ., ., ., R, R, ., ., R, ., ., .}{up}
\caption{Moving on a straight line for \algo{2}{no lights}.}\label{fig:line}
\end{figure}

One can observe that the two last rules are distinguishable by the
robots thanks to their common chirality. Indeed, in one local view,
the central robot sees a robot $r$ at distance one and a robot $r'$ on
the right of $r$. In the other local view, $r'$ is on the left of
$r$. Then, a adjustment is done in two rounds: In the first round
(Fig.~\ref{fig:firstr}), a beacon robot sees a follower in diagonal,
and moves up. Simultaneously, the follower sees the beacon and moves
towards the nodes on the right of the beacon robot.  In the second
round (Fig.~\ref{fig:secondr}), the beacon robot moves away, on the
left of the aforementioned follower it sees at distance two (\ie, on
the right from a global point of view described in
Fig.~\ref{fig:secondr}). Simultaneously, that the follower, which sees
the beacon robot at distance two, catches up with the other robots of
the moving group that are on its left.

\noindent\begin{minipage}{0.4\textwidth}
        \algoStep{., ., ., ., ., ., R, ., ., R, ., ., .}{up}
        \algoStep{., ., ., R, ., R, R, ., ., R, ., ., .}{right}
        \captionof{figure}{First round.}\label{fig:firstr}
\end{minipage}
\hfill
\begin{minipage}{0.4\textwidth}
        \algoStep{., ., ., ., ., ., R, ., ., ., ., ., R}{right}
        \algoStep{R, ., ., ., R, ., R, ., ., ., ., ., .}{left}
        \captionof{figure}{Second round.}\label{fig:secondr}
\end{minipage}

\begin{theorem}
    Algorithm \algo{2}{no lights} solves the exclusive IGE problem using seven robots without lights and a visibility range of two.
\end{theorem}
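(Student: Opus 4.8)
The plan is to follow exactly the template used above for $\algo{1}{Fixed}$ and $\algo{1}{Modifiable}$: first dispatch exclusiveness by a purely local inspection of the rules, then set up a per-phase sequence of configurations and prove by induction that it is realized by the execution, and finally observe that the visited region exhausts $\Z\times\Z$. For exclusiveness, I would argue that in every reachable local view appearing in Figs.~\ref{fig:line}, \ref{fig:firstr} and \ref{fig:secondr} (and their chirally/rotationally equivalent copies), no two robots that are made to move are sent to the same target node and no two robots traverse the same edge in opposite directions; since this is a finite check over the rule set, it follows that throughout the whole execution no node is ever simultaneously occupied by two robots and no edge is ever simultaneously crossed, which is what exclusive IGE additionally requires over IGE. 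The obstructed-visibility caveat is relevant here too: one must make sure the rules only read the non-hidden (non-crossed-out) cells, so that the destination is well defined.

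Next I would fix the analysis coordinates so that the center of the square explored in Phase $i$ is the initial position of the bottom follower, and define $C^i=\{M^i, C_0^i, C_1^i, C_2^i, C_3^i\}$, where $M^i$ is the three-robot moving group and $C_0^i,\dots,C_3^i$ are the four beacons sitting at the corners of a square of side $2i$: each $C_k^i$ is obtained from $C_k^0$ by the translation $\tr{\pm i,\pm i}$ corresponding to its corner, and $M^i$ is placed next to the bottom-left beacon in the orientation in which the group is about to start a side. One checks that the initial configuration of Fig.~\ref{fig:ICnl} is exactly $C^0$. The core step is then to show that from $C^i$ the execution reaches $C^{i+1}$ in finitely many rounds. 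While the moving group sees no beacon within range, the rules of Fig.~\ref{fig:line} make it rigidly translate one node per round in its current direction (I would verify that these three views cycle and keep the right-angle shape, and that the two ``straight line'' rules are distinguished solely by the common chirality). When a follower detects a beacon at distance two, the two-round adjustment of Figs.~\ref{fig:firstr}--\ref{fig:secondr} fires; by composing the corresponding globally oriented views I would check that after these two rounds (i) the beacon has been displaced by an outward diagonal vector of the form $(\pm1,\pm1)$, (ii) the moving group has turned left and is aligned to traverse the next side, now longer by two, and (iii) the robot of $M$ with exactly one companion on its horizontal axis and one on its vertical axis — i.e.\ the new leader — is the correct one, so the leadership hand-off is unambiguous. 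Iterating over the four corners, each beacon is pushed out by its appropriate $(\pm1,\pm1)$, the side grows from $2i$ to $2i+2$, and the group returns next to the new bottom-left beacon in the starting orientation, so the configuration is precisely $C^{i+1}$.

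Coverage then follows as in the companion proofs: since $C^0=I$, induction gives that every $C^i$ is reached in finite time, and during Phase $i$ the successive leaders of the moving group traverse all edges of the square of side $2i$ centered at the fixed reference point (see Fig.~\ref{fig:visit3}). The union over $i\ge0$ of the vertex sets of these nested, growing squares is all of $\Z\times\Z$, hence every node is eventually visited; combined with exclusiveness this proves the theorem.

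I expect the main obstacle to be the two-round adjustment under the combination of visibility range two, obstructed visibility, and anonymity: because a robot at distance one may hide a robot at distance two, the rules triggering and executing a turn must be shown to depend only on the visible cells, and because the three moving robots are indistinguishable and arranged in a right angle, the determination of ``who is the leader'' and ``which direction is next'' at each corner must be shown unambiguous from the local view alone, which forces a careful, if routine, enumeration of cases (and of their chiral/rotational images). As for $\algo{1}{Modifiable}$, the cleanest way to discharge these finite but numerous checks — that the rule set is well-defined, and that the per-phase transformation acts on the beacons and the moving group as claimed — is to rely on the machine verification reported in the complementary material~\cite{quentin_bramas_2019_2625730}; the genuine mathematical content is the phase recursion $C^i \leadsto C^{i+1}$ sketched above.
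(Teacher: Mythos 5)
Your proposal is correct and follows essentially the same route as the paper's own proof: exclusiveness by finite inspection of the rules, a per-phase family of configurations $C^i$ with the four beacons translated by $(\pm i,\pm i)$, an inductive phase recursion $C^i \leadsto C^{i+1}$ obtained by composing the straight-line rules with the two-round corner adjustments, and coverage via the union of the nested growing squares. The only cosmetic difference is where you cut the phase boundary (you start the group at the bottom-left beacon, the paper starts it just below the top-right one), which does not affect the argument.
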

\begin{proof} 
Again, the fact that using \algo{2}{no lights}, robots never simultaneously cross the same edge is direct from the definition of the rules. In the following, we also show that no two robots ever simultaneously occupy the same node and every node is eventually visited by exhibiting a sequence of
configurations reached by the robots.  $I= C^0 = \{M^0, C_0^0,
C_1^0, C_2^0, C_3^0\}$ is the initial configuration, where $M^0
=\{(3,2),(3,3),$ $(4,3)\}$, $C_0^0=\{(0,0)\}$, $C_1^0=\{(5,-1)\}$,
$C_2^0=\{(5,4)\}$, $C_3^0=\{(1,5)\}$ (with $(0,0)$ being the
bottom-left beacon robot).  $C^i= \{M^i, C_0^i, C_1^i, C_2^i,
C_3^i\}$, where $M^i = \tr{i,i}(M^0)$, $C_0^i = \tr{-i,-i}(C_0^0)$,
$C_1^i = \tr{i,-i}(C_1^0)$, $C_2^i = \tr{i,i}(C_2^0)$, $C_3^i =
\tr{-i,i}(C_3^0)$, is the configuration at Phase $i>0$ obtained from
$C^0$ by several translations.

Starting from Configuration $C^i$, the moving group $M^i$ makes a left turn and the robot in $C_2^i$ is  translated by a vector $(1,1)$. Then, the moving group moves straight on the left towards the top-left beacon, performs a left turn and the robot in $C_3^i$ is  translated by a vector $(-1,1)$.
Next, the moving group moves straight down towards the bottom-left beacon, performs a left turn and the robot in $C_0^i$ is  translated by a vector $(-1,-1)$. Again, the moving group moves straight right towards the bottom-right beacon, performs a left turn and the robot in $C_1^i$ is  translated by a vector $(1,-1)$. Finally, the moving group moves straight up until it reaches the top-right beacon. At this point, the moving group is translated by vector $(1,1)$ with respect to its position at the beginning of the phase, and the configuration is exactly $C^{i+1}$.

Recursively, the execution reaches $C^i$ for all $i\geq 0$ in finite
time (Fig.~\ref{fig:visit2}). Between Phase $i$ and $i+1$, the nodes located at the edges of the square $\left\{\tr{-i,-i}(2,1),\tr{i,-i}(4, 1), \tr{i,i}(4,3), \tr{-i,i}(2,3)\right\}$ are visited by the successive leaders of the moving group,  and the node $(3,2)$ is initially already visited.
\end{proof}

\begin{figure}[H]
\tikzstyle{algonode}=[circle,draw=black,semithick, minimum width=1.0em, minimum height=1.0em, inner sep=0.1em]

\tikzstyle{algonode-small}=[draw=black, circle, semithick, minimum width=0em, minimum height=0em, inner sep=0.00em]

\setFigureTextSize{\tiny}
\setFigureScale{0.5}

\noindent\begin{center}
    \algoGrid[
\def\figoffset{0.05}
\foreach \i in {0,...,2}
{
    \draw[thick] (2-\figoffset-\i,-4-\figoffset-\i)
     -- (4-\figoffset+\i,-4-\figoffset-\i)
     -- (4-\figoffset+\i,-2-\figoffset+\i)
     -- (2-\figoffset-\i,-2-\figoffset+\i)
     -- cycle;
}
\draw[smooth] (4,-2.7) to[out=-10,in=-150] (10,-2);
\node[] at (15.2,-2) {Visited between Phase 0 and 1};
\draw[smooth] (5,-3.7) to[out=-10,in=-150] (10,-3.2);
\node[] at (15.2,-3.2) {Visited between Phase 1 and 2};
\draw[smooth] (6,-4.7) to[out=-10,in=-150] (10,-4.4);
\node[] at (15.2,-4.4) {Visited between Phase 2 and 3};
]{7}{
.,R,.,.,.,.,.,
.,.,.,.,.,R,.,
.,.,.,R,R,.,.,
.,.,.,R,.,.,.,
.,.,.,.,.,.,.,
R,.,.,.,.,.,.,
.,.,.,.,.,R,.}{}
\end{center}

\caption{Visited grid after three phases for $\algo{2}{no lights}$.}\label{fig:visit3}
\end{figure}

\section{Conclusion}

We have considered the problem of exploring an infinite discrete
environment, namely an infinite grid-shaped graph, using a small
number of mobile synchronous robots with low computation and
communication capabilities.  In particular, our robots are opaque and
only agree on a common chirality.  We show that using modifiable
lights with few states (actually five), five such robots, with a
visibility range restricted to one, are necessary and sufficient to
solve the (exclusive) Infinite Grid Exploration (IGE) problem. We also
provide two other algorithms that respectively solve (1) the
non-exclusive IGE problem using six robots still with visibility range
one but only three constant colors, (2) and the exclusive IGE problem
using seven oblivious anonymous robots yet assuming visibility range
two.

The immediate perspective of this work is to study the optimality, in
terms of number of robots, when we consider the cases of robots where
lights have constant colors or no color at all. As a matter of facts,
we conjecture that our algorithm for six robots is optimal when lights
are assumed to have fixed colors.
As a longer term perspective, we envision to study the IGE problem in
fully asynchronous settings.

\bibliographystyle{plain}
\bibliography{biblio}

\clearpage
\appendix

\section{Additional Figures}

\setFigureScale{0.8}
\setVisibilityRange{1}
\algoSetRobotColor{F}{green!20!black}
\algoSetRobotColor{L}{blue}
\algoSetRobotColor{B}{red}

\begin{figure}[h]\centering
$\mathcal{R}_{\mathit{straightH}}$ and $\mathcal{R}_{\mathit{straightT}}$:\\
        \algoStep{., ., L, F, .}{up} \quad  \algoStep{., L, F, ., .}{up}

$\mathcal{R}_{\mathit{turnA1}}$ and
$\mathcal{R}_{\mathit{turnT1}}$:\\
\algoStep{., ., B, F, L}{right}\quad
\algoStep{., B, F, ., .}{left}

 $\mathcal{R}_{\mathit{turnA2}}$ and
      $\mathcal{R}_{\mathit{turnT2}}$:\\
    \algoStep{., ., F, B, L}{left} \quad
\algoStep{., F, B, ., .}{up}
\caption{Rules of Algorithm \algo{1}{fixed}.}\label{fig:rulesalgo1}
\end{figure}

\algoSetRobotColor{R}{black}
\resetAlgoNodeStyle{}
\setFigureScale{0.7}
\setVisibilityRange{2}

\begin{figure}[H]
Moving on a straight line:\\
    \algoStep{., ., ., ., ., ., R, R, ., ., R, ., .}{up}
    \algoStep{., ., R, R, ., ., R, ., ., ., ., ., .}{up}
    \algoStep{., ., ., ., ., R, R, ., ., R, ., ., .}{up}

    First round of a turn:\\
    \algoStep{., ., ., ., ., ., R, ., ., R, ., ., .}{up}
    \algoStep{., ., ., R, ., R, R, ., ., R, ., ., .}{right}

    Second round of a turn:\\
    \algoStep{., ., ., ., ., ., R, ., ., ., ., ., R}{right}
    \algoStep{R, ., ., ., R, ., R, ., ., ., ., ., .}{left}
    \caption{Rules for Algorithm \algo{2}{nocolors}.}\label{fig:rulesalgo3}
\end{figure}

\algoSetRobotColor{R}{black}

\end{document}